\newcommand{\declarecolor}[2]{\definecolor{#1}{RGB}{#2}\expandafter\newcommand\csname #1\endcsname[1]{\textcolor{#1}{##1}}}
\newcommand{\Loss}{L}
\newcommand{\SurrogateLoss}{\widetilde{\Loss}}
\newcommand{\MaxObj}{f}
\newcommand{\mat}[1]{\mathbf{#1}}
\newcommand{\tensor}[1]{\bm{\mathscr{#1}}}
\newcommand{\frobenius}{\textnormal{F}}
\newcommand{\ktimes}{\otimes}
\newcommand{\rank}{\textnormal{rank}}
\newcommand{\R}{\mathbb{R}}
\newcommand{\Z}{\mathbb{Z}}
\DeclareMathOperator*{\defeq}{\overset{def}{=}}
\renewcommand{\epsilon}{\varepsilon}
\DeclareMathOperator*{\argmax}{arg\,max}
\DeclareMathOperator*{\argmin}{arg\,min}
\newcommand{\ropti}[1]{R_{#1}^*}
\DeclarePairedDelimiter{\inner}{\langle}{\rangle}
\DeclarePairedDelimiter{\abs}{\lvert}{\rvert}
\DeclarePairedDelimiter{\set}{\{}{\}}
\DeclarePairedDelimiter{\parens}{(}{)}
\DeclarePairedDelimiter{\bracks}{[}{]}
\DeclarePairedDelimiter{\norm}{\lVert}{\rVert}
\DeclarePairedDelimiter{\ceil}{\lceil}{\rceil}
\DeclarePairedDelimiter{\floor}{\lfloor}{\rfloor}
\theoremstyle{plain}
\newtheorem{theorem}{Theorem}[section]
\newtheorem{lemma}[theorem]{Lemma}
\theoremstyle{definition}
\newtheorem{definition}[theorem]{Definition}
\theoremstyle{remark}
\newtheorem{remark}[theorem]{Remark}
\newcommand{\todo}[1]{{\color{red} TODO: {#1}}}
\icmltitlerunning{Approximately Optimal Core Shapes for Tensor Decompositions}
\begin{document}

\twocolumn[
\icmltitle{Approximately Optimal Core Shapes for Tensor Decompositions}



\icmlsetsymbol{equal}{*}

\begin{icmlauthorlist}
\icmlauthor{Mehrdad Ghadiri}{equal,gt}
\icmlauthor{Matthew Fahrbach}{equal,google}
\icmlauthor{Gang Fu}{google}
\icmlauthor{Vahab Mirrokni}{google}
\end{icmlauthorlist}

\icmlaffiliation{google}{Google Research}
\icmlaffiliation{gt}{Georgia Tech}

\icmlcorrespondingauthor{Mehrdad Ghadiri}{mghadiri3@gatech.edu}

\icmlkeywords{Machine Learning, ICML}

\vskip 0.3in
]



\printAffiliationsAndNotice{\icmlEqualContribution} 

\begin{abstract}
This work studies the combinatorial optimization problem of
finding an optimal \emph{core tensor shape}, also called multilinear rank,
for a size-constrained Tucker decomposition.
We give an algorithm with provable approximation guarantees
for its reconstruction error via connections to higher-order singular values.
Specifically, we introduce a novel \emph{Tucker packing problem},
which we prove is NP-hard,
and give a polynomial-time approximation scheme
based on a reduction to the 2-dimensional knapsack problem with a matroid constraint.
We also generalize our techniques to \emph{tree tensor network decompositions}.
We implement our algorithm using an integer programming solver,
and show that its solution quality is
competitive with (and sometimes better than)
the greedy algorithm that uses the true Tucker decomposition loss at each step,
while also running up to 1000x faster.
\end{abstract}

 \section{Introduction}
\label{sec:introduction}

Low-rank tensor decomposition is a powerful tool in
the modern machine learning toolbox.
Like low-rank matrix factorization,
it has countless applications in
scientific computing, data mining, and signal processing~\citep{kolda2009tensor,sidiropoulos2017tensor},
e.g., anomaly detection in data streams~\cite{jang2021fast}
and
compressing convolutional
neural networks on mobile devices for
faster inference while reducing power consumption~\citep{kim2015compression}.

The most widely used tensor decompositions are
the canonical polyadic (CP) decomposition,
Tucker decomposition,
and tensor-train decomposition~\citep{oseledets2011tensor}---the last two being instances of
\emph{tree tensor networks}~\citep{kramer2020tree}.
CP decomposition factors a tensor into the sum of~$r$ rank-one tensors.
Tucker decomposition, however,
specifies the rank $R_n$ in each dimension $n$
and relies on a core tensor $\tensor{G} \in \R^{R_1 \times \dots \times R_N}$
for reconstructing the decomposition.
The notion of \emph{multilinear rank} $\mat{r} = (R_1,\dots,R_N)$
puts practitioners in a challenging spot
because the set of feasible core shapes can be exponentially large.
Furthermore, searching in this state space can be prohibitively expensive
because evaluating the true quality of a core shape requires computing a Tucker decomposition,
which for large tensors can take hours and consume hundreds of GB of RAM.
For example, in the MATLAB Tensor Toolbox~\citep{matlab},
we need to specify the core shape parameter
\texttt{ranks} in advance before computing a size-constrained Tucker decomposition.

\begin{figure}[t]
\centering
\includegraphics[width=0.75\linewidth]{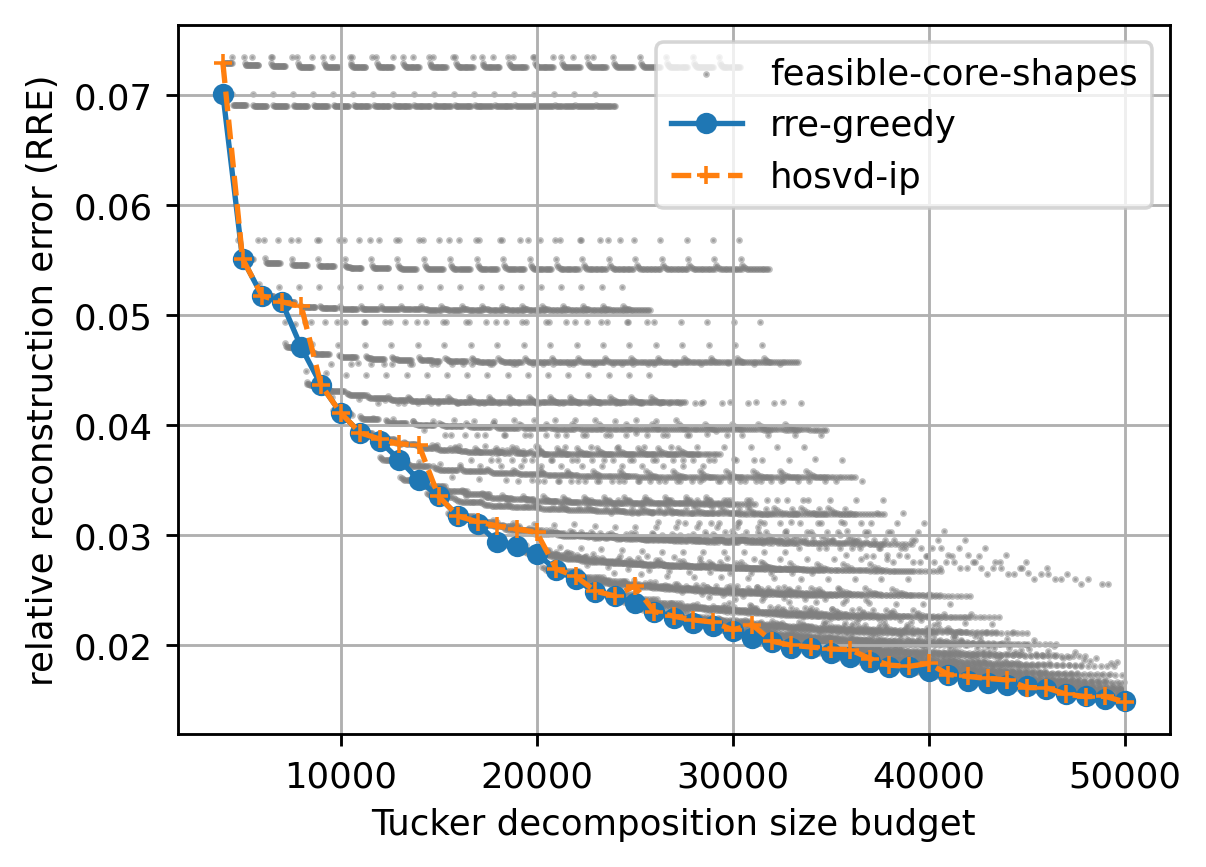}
\vspace{-0.25cm}
\caption{Pareto frontier of core shapes $\mat{r} \in [20]^3$
for hyperspectral tensor $\tensor{X} \in \R^{1024 \times 1344 \times 33}$.
Plots the RRE, i.e., $L(\tensor{X},\mat{r}) / \norm{\tensor{X}}_{\frobenius}^2$, as a function of compression rate.
RRE-greedy builds core shapes
by computing Tucker decompositions at each step.
HOSVD-IP is~\Cref{alg:total-size-Tucker} with integer programming,
which builds core shapes
via a surrogate packing problem on higher-order singular values.
}
\label{fig:pareto-curve}
\vspace{-0.25cm}
\end{figure}

In practice, the most popular Tucker decomposition algorithms are the 
$\mat{r}$-truncated higher-order singular value decomposition (HOSVD) in~\citet{HOSVD},
sequentially truncated ST-HOSVD in~\citet{vannieuwenhoven2012new},
and higher-order orthogonal iteration (HOOI),
which is a structured alternating least squares algorithm.

We explore the simple but fundamental discrete optimization problem for low-rank tensor decompositions:
\begin{quote}\emph{If a Tucker decomposition of $\tensor{X}$ can use at most~$c$ parameters,
which core tensor shape minimizes the reconstruction error?}
\end{quote}
This is a multilinear generalization of the best rank-$r$ matrix approximation problem.
While there are many parallels to low-rank matrix factorization,
tensor rank-related problems can be thoroughly different and more challenging than their
matrix counterparts.
For example, computing the CP rank of a real-valued tensor is NP-hard~~\citep{hillar2013most}.

\subsection{Our contributions and techniques}
We summarize the main contributions of this work below:
\begin{enumerate}
    \item We formalize the core tensor shape problem for
    size-constrained Tucker decompositions and introduce the
    \emph{Tucker packing problem}, which we prove is NP-hard.
    The approximation algorithms we
    develop build on a relationship between the optimal
    reconstruction error of a rank-$\mat{r}$
    Tucker decomposition and a multi-dimensional tail sum of its
    higher-order singular values~\citep{HOSVD, hackbusch2019tensor}.

    \item We design a polynomial-time approximation scheme (PTAS)
    for the surrogate Tucker packing problem (\Cref{thm:budget-splitting})
    by showing that it suffices to consider a small number of
    budget splits between the cost of
    the core tensor and the cost of the factor matrices.
    Each budget split subproblem reduces to a
    \emph{2-dimensional knapsack problem with a partition matroid constraint}
    after minor transformations.
    We solve these knapsack problems
    using the PTAS of~\citet{grandoni2014new},
    or in practice with integer linear programming.
    
    \item We extend our approach to
    \emph{tree tensor networks},
    which generalize the Tucker decomposition,
    tensor-train decomposition, and hierarchical
    Tucker decomposition.
    In doing so, we synthesize several works
    on tree tensors from the mathematics and physics communities,
    and give a succinct introduction for computer scientists.
    
    \item Finally,
    we demonstrate the effectiveness of our Tucker
    packing-based core shape solvers on four real-world tensors.
    Our HOSVD-IP algorithm
    is competitive with (and sometimes outperforms)
    the greedy algorithm that uses the true RRE,
    while running up to 1000x faster.
\end{enumerate}

\subsection{Related works}
\paragraph{Core shape constraints.}
\citet{de2000best} introduced the problem
of computing the best rank-$\mat{r}$
tensor approximation for a \emph{prespecified} core shape~$\mat{r}$,
and demonstrated the benefit of
initializing the decomposition with a
truncated HOSVD
and then running iterative methods such as HOOI.
\citet{elden2009newton, ishteva2009differential, ishteva2011best, ishteva2013jacobi, elden2022krylov}
consider this problem for rank-$(r_1,r_2,r_3)$
decompositions and develop a suite of advanced algorithms:
a Newton method on Grassmannian manifolds,
a trust-region method on Riemannian manifolds,
Jacobi rotations for symmetric tensors,
and a Krylov-type iterative method.
All these works, however, are concerned with optimizing the
tensor decomposition for a fixed core shape---not with optimizing the core tensor shape itself.


\citet{ehrlacher2021adaptive} and \citet{xiao2021rank}
recently explored \emph{rank-adaptive} methods for HOOI that
find minimal core shapes such that the Tucker decomposition achieves
a target reconstruct error.
They also leverage properties of the HOSVD, but they
\emph{do not impose a hard constraint on the size of the returned Tucker decomposition}.
\citet{hashemizadeh2020adaptive} generalized the
RRE-greedy algorithm in~\Cref{fig:pareto-curve}
to tensor networks for both rank and size constraints.

\paragraph{Low-rank tensor decomposition.}
\citet{song2019relative} gave
polynomial-time
$(1+\varepsilon)$-approximation algorithms
for many types of low-rank tensor decompositions
with respect to the Frobenius norm, including CP and Tucker decompositions.
\citet{frandsen2022optimization} showed that if a third-order tensor
has an exact Tucker decomposition, then
all local minima of an appropriately regularized
loss landscape are globally optimal.
Several works recently studied
Tucker decomposition in streaming models~\citep{traore2019singleshot, sun2020low}
and a sliding window model~\citep{jang2021fast}.
Fast randomized low-rank tensor decomposition
algorithms based on sketching have been proposed in~\citet{zhou2014decomposition,cheng2016spals,battaglino2018practical,malik2018low, che2019randomized, ma2021fast, larsen2022practical, fahrbach2022subquadratic, malik2022more}.

\section{Preliminaries}
\label{sec:preliminaries}


\paragraph{Notation.}
The \emph{order} of a tensor is its number of dimensions.
We denote scalars by normal lowercase letters $x \in \R$,
vectors by boldface lower letters $\mat{x} \in \R^n$,
matrices by boldface uppercase letters $\mat{X} \in \R^{m \times n}$,
and higher-order tensors by boldface script letters
$\tensor{X} \in \R^{I_1 \times \dots \times I_N}$.
We use normal uppercase letters for the size of an index set,
e.g., $[N] = \{1, 2, \dots, N\}$.
We denote the $i$-th entry of vector~$\mat{x}$ by $x_i$,
the $(i,j)$-th entry of matrix $\mat{X}$ by $x_{ij}$, and the
$(i,j,k)$-th entry of a third-order tensor $\tensor{X}$ by $x_{ijk}$.

\paragraph{Tensor products.}
The \emph{fibers} of a tensor are the vectors we get by
fixing all but one index.
For example, if $\tensor{X} \in \R^{3}$,
we denote the column, row, and tube fibers by
$\mat{x}_{:jk}$, $\mat{x}_{i:k}$, and $\mat{x}_{ij:}$, respectively.
The \emph{mode-$n$ unfolding} of a
tensor $\tensor{X} \in \R^{I_1 \times \dots \times I_N}$ is the matrix
$\mat{X}_{(n)} \in \R^{I_n \times (I_1 \dots I_{n-1} I_{n+1}\dots I_N)}$
that arranges the mode-$n$ fibers of $\tensor{X}$ as columns of
$\mat{X}_{(n)}$ ordered lexicographically by index.

We denote the \emph{$n$-mode product} of a tensor
$\tensor{X} \in \R^{I_1 \times \dots \times I_N}$ and matrix
$\mat{A} \in \R^{J \times I_N}$ by
$\tensor{Y} = \tensor{X} \times_n \mat{A}$, where
$\tensor{Y} \in \R^{I_1 \times \dots \times I_{n-1} \times J \times I_{n+1} \times \dots \times I_{N}}$.
This operation multiplies each mode-$n$ fiber of $\tensor{X}$ by $\mat{A}$, and can be expressed element-wise as
\[
    (\tensor{X} \times_{n} \mat{A})_{i_1 \dots i_{n-1} j i_{n+1} \dots i_{N}}
    =
    \sum_{i_n=1}^{I_n} x_{i_1 i_2 \dots i_N} a_{j i_n}.
\]
The inner product of two tensors
$\tensor{X}, \tensor{Y} \in \R^{I_1 \times \dots \times I_N}$
is the sum of the products of their entries:
\[
    \inner{\tensor{X}, \tensor{Y}}
    =
    \sum_{i_1=1}^{I_1}
    \sum_{i_2=1}^{I_2}
    \dots
    \sum_{i_N=1}^{I_N}
    x_{i_1 i_2 \dots i_N}
    y_{i_1 i_2 \dots i_N}.
\]
The Frobenius norm of a tensor $\tensor{X}$ is 
$\norm{\tensor{X}}_{\frobenius} = \sqrt{\inner{\tensor{X}, \tensor{X}}}$.

\paragraph{Tucker decomposition.}
The \emph{Tucker decomposition} of a tensor
$\tensor{X} \in \R^{I_1 \times \dots \times I_N}$
decomposes $\tensor{X}$ into a
\emph{core tensor} $\tensor{G} \in \R^{R_1 \times \dots \times R_N}$
and $N$ \emph{factor matrices}
$\mat{A}^{(n)} \in \R^{I_n \times R_n}$.
We refer to $\mat{r} = (R_1,\dots,R_N)$ as the \emph{core shape},
which is also called the \emph{multilinear rank} or \emph{truncation}
of the decomposition.
We denote the loss of an optimal rank-$\mat{r}$ Tucker decomposition by
\[
    L(\tensor{X}, \mat{r})
    \defeq
    \min_{\tensor{G} \in \R^{R_1 \times \dots \times R_N}}
    \norm{\tensor{X} - \tensor{G} \times_{1} \mat{A}^{(1)} \times_{2} \dots \times_{N} \mat{A}^{(N)}}_{\frobenius}^2.
\]

\vspace{-0.15cm}

\section{Reduction to HOSVD Tucker packing}
\label{sec:hosvd_packing}

\subsection{Higher-order singular value decomposition}
\label{subsec:hosvd}

We start with a recap of the seminal work 
on higher-order singular value decompositions (HOSVD)
by~\citet*{HOSVD}.

\begin{theorem}[{\citet[Theorem 2]{HOSVD}}]
\label{thm:hosvd}
Any tensor $\tensor{X}\in\mathbb{R}^{I_1\times\cdots\times I_N}$
can be written as
\begin{align*}
    \tensor{X} = \tensor{S} \times_1 \mat{U}^{(1)} \times_2 \cdots \times_N \mat{U}^{(N)},
\end{align*}
where each $\mat{U}^{(n)} \in {\R}^{I_n \times I_n}$ is an orthogonal matrix
and
$\tensor{S} \in \mathbb{R}^{I_1\times\cdots\times I_N}$ is a tensor
with subtensors $\tensor{S}_{i_n = \alpha}$, obtained by fixing the
$n$-th index to $\alpha$, that have the properties:
\begin{enumerate}
    \item \emph{all-orthogonality:}
    for all possible values of $n$, $\alpha$ and $\beta$ subject to $\alpha \ne \beta$,
      two subtensors $\tensor{S}_{i_n = \alpha}$ and $\tensor{S}_{i_n = \beta}$
      are orthgonal, i.e.,
      $
        \inner*{\tensor{S}_{i_n = \alpha}, \tensor{S}_{i_n = \beta}} = 0 \text{~when~} \alpha \ne \beta;
      $
    \item \emph{ordering:}
    for all values of $n$,
    $
        \norm{\tensor{S}_{i_n=1}}_{\frobenius}
        \ge
        \norm{\tensor{S}_{i_n=2}}_{\frobenius}
        \ge
        \dots
        \ge
        \norm{\tensor{S}_{i_n=I_n}}_{\frobenius}
        \ge 0.
    $
\end{enumerate}
Furthermore, the values $\norm{\tensor{S}_{i_n=i}}_{\frobenius}$,
denoted by $\sigma_{i}^{(n)}$,
are the singular values of the mode-$n$ unfolding $\mat{X}_{(n)}$,
and the columns of
$\mat{U}^{(n)}$ are the the left singular vectors.
\end{theorem}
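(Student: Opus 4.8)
The plan is to construct the decomposition directly from the ordinary matrix SVDs of the mode-$n$ unfoldings, and then verify the two structural properties of the core tensor $\tensor{S}$ by translating them into statements about the rows of an unfolding.

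First, for each mode $n \in [N]$ I would compute the full SVD of the mode-$n$ unfolding, $\mat{X}_{(n)} = \mat{U}^{(n)} \Sigma^{(n)} (\mat{V}^{(n)})^\top$, where $\mat{U}^{(n)} \in \R^{I_n \times I_n}$ and $\mat{V}^{(n)}$ are orthogonal and $\Sigma^{(n)}$ is diagonal with singular values $\sigma_1^{(n)} \ge \cdots \ge \sigma_{I_n}^{(n)} \ge 0$. This defines the orthogonal factor matrices $\mat{U}^{(n)}$ and, by construction, makes their columns the left singular vectors of $\mat{X}_{(n)}$. I then define the core by $\tensor{S} \defeq \tensor{X} \times_1 (\mat{U}^{(1)})^\top \times_2 \cdots \times_N (\mat{U}^{(N)})^\top$. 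Since each $\mat{U}^{(n)}$ is orthogonal, applying $\mat{U}^{(n)}$ in mode $n$ inverts $(\mat{U}^{(n)})^\top$, which immediately yields the reconstruction $\tensor{X} = \tensor{S} \times_1 \mat{U}^{(1)} \times_2 \cdots \times_N \mat{U}^{(N)}$.

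The crux is to verify all-orthogonality and ordering, and the key tool is the identity expressing the mode-$n$ unfolding of a multilinear product as a matrix product: for $\tensor{Y} = \tensor{X} \times_1 \mat{A}^{(1)} \times_2 \cdots \times_N \mat{A}^{(N)}$, one has $\mat{Y}_{(n)} = \mat{A}^{(n)} \mat{X}_{(n)} (\mat{A}^{(N)} \ktimes \cdots \ktimes \mat{A}^{(n+1)} \ktimes \mat{A}^{(n-1)} \ktimes \cdots \ktimes \mat{A}^{(1)})^\top$. Applying this to $\tensor{S}$ gives $\mat{S}_{(n)} = (\mat{U}^{(n)})^\top \mat{X}_{(n)} \mat{W}$, where $\mat{W}$ is the Kronecker product of the remaining $\mat{U}^{(m)}$ and is therefore orthogonal. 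Substituting the SVD collapses this to $\mat{S}_{(n)} = \Sigma^{(n)} (\mat{V}^{(n)})^\top \mat{W}$, i.e., $\Sigma^{(n)}$ times an orthogonal matrix. Hence the rows of $\mat{S}_{(n)}$ are mutually orthogonal, with row $\alpha$ having Euclidean norm exactly $\sigma_\alpha^{(n)}$.

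Finally I would translate this back into tensor language. Row $\alpha$ of $\mat{S}_{(n)}$ is precisely the vectorization of the subtensor $\tensor{S}_{i_n=\alpha}$, so the inner product $\inner{\tensor{S}_{i_n=\alpha}, \tensor{S}_{i_n=\beta}}$ equals the dot product of rows $\alpha$ and $\beta$ of $\mat{S}_{(n)}$, which vanishes for $\alpha \ne \beta$—this is all-orthogonality. Likewise $\norm{\tensor{S}_{i_n=\alpha}}_{\frobenius}$ equals the norm of row $\alpha$, namely $\sigma_\alpha^{(n)}$, so the monotone ordering of the singular values delivers the ordering property and simultaneously identifies $\norm{\tensor{S}_{i_n=i}}_{\frobenius} = \sigma_i^{(n)}$ as the singular values of $\mat{X}_{(n)}$. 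I expect the main obstacle to be stating and applying the unfolding–Kronecker identity correctly, since the index bookkeeping—the reversed order of the Kronecker factors and the placement of transposes—is where convention errors typically creep in; once it is in hand, everything reduces to orthogonality of a product of orthogonal matrices.
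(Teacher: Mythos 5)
Your proof is correct. The paper does not actually prove \Cref{thm:hosvd} itself---it is imported verbatim from \citet{HOSVD}---and your construction (full SVDs of the mode-$n$ unfoldings defining the $\mat{U}^{(n)}$, the core $\tensor{S}$ obtained by applying their transposes, then reading off all-orthogonality, ordering, and the identification $\norm{\tensor{S}_{i_n=i}}_{\frobenius}=\sigma_i^{(n)}$ from $\mat{S}_{(n)}=\Sigma^{(n)}(\mat{V}^{(n)})^\top\mat{W}$ with $\mat{W}$ an orthogonal Kronecker product) is essentially the original argument behind that cited theorem, resting on the same two facts the paper itself invokes in its appendix, namely the unfolding identity \Cref{eq:mode_n_mat} and the orthogonality of Kronecker products of orthogonal matrices.
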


Next, we present the $\texttt{TuckerHOSVD}$ algorithm.
This is a widely used initialization strategy
when computing rank-$\mat{r}$
Tucker decompositions~\citep{kolda2009tensor},
i.e., if the core shape~$\mat{r}$ is predetermined.

\begin{algorithm}[H]
\caption{$\texttt{TuckerHOSVD}$}
\label{alg:tucker_hosvd}
\textbf{Input:} $\tensor{X} \in \R^{I_1 \times \cdots \times I_N}$,
 core shape $\mat{r} = (R_1,\dots,R_N)$

\begin{algorithmic}[1]
    \FOR{$n=1$ to $N$}
        \STATE $\mat{A}^{(n)} \gets R_n \text{ top left singular vectors of } \mat{X}_{(n)}$
    \ENDFOR
    \STATE $\tensor{G} \gets \tensor{X} \times_{1} {\mat{A}^{(1)}} \times_{2} \dots \times_{N} {\mat{A}^{(N)}}$
    \STATE \textbf{return}
      $\tensor{G}, \mat{A}^{(1)}, \mat{A}^{(2)}, \dots, \mat{A}^{(N)}$ 
\end{algorithmic}
\end{algorithm}

The output of \texttt{TuckerHOSVD} has the following error guarantees~\citep{HOSVD, hackbusch2019tensor}.
These bounds suggest a less expensive
\emph{surrogate loss function} to minimize instead
when optimizing the core tensor
shape subject to a Tucker decomposition size constraint.
We give self-contained proofs of these results
in \Cref{subsec:reconstruction_error_proofs}
that build on \Cref{thm:hosvd}.

\begin{restatable}[{\citet[Property 10]{HOSVD};
\citet[Theorem 10.2]{hackbusch2019tensor}}]{theorem}{HosvdTruncationError}
\label{thm:tensor_subspace_approximation}
For any tensor $\tensor{X} \in \R^{I_1 \times \dots \times I_N}$ and core shape $\mat{r} \in [I_1] \times \dots \times [I_N]$,
let the output of $\textup{\texttt{TuckerHOSVD}}(\tensor{X}, \mat{r})$ be
$\tensor{G} \in \R^{R_1 \times \dots \times R_N}$ and
$\mat{A}^{(n)} \in \R^{I_n \times R_n}$, for each $n \in [N]$.
If we let
\begin{equation}
\label{def:hosvd_reconstruction}
    \widehat{\tensor{X}}_{\textnormal{HOSVD}(\mat{r})}
    \defeq
    \tensor{G} \times_{1} \mat{A}^{(1)} \times_{2} \dots \times_{N} \mat{A}^{(N)}
\end{equation}
denote the reconstructed $\mat{r}$-truncated tensor, then
\begin{align*}
    \norm{\tensor{X} - \widehat{\tensor{X}}_{\textnormal{HOSVD}(\mat{r})}}_{\frobenius}^2
    &\le \sum_{n=1}^N \sum_{i_n = R_n + 1}^{I_n} \parens*{\sigma_{i_n}^{(n)}}^2 \\
    &\le N \cdot \Loss(\tensor{X}, \mat{r}).
\end{align*}
Furthermore, we have
$
    \Loss(\tensor{X}, \mat{r}) \le \norm{\tensor{X} - \widehat{\tensor{X}}_{\textnormal{HOSVD}(\mat{r})}}_{\frobenius}^2.
$
\end{restatable}
\cref{thm:tensor_subspace_approximation} implies that the following function is a meaningful proxy for the reconstruction error of an optimal rank-$\mat{r}$ Tucker decomposition.

\begin{definition}
Define the \emph{surrogate loss} of core shape $\mat{r}$ as
\begin{align}
\label{eqn:surrogate_loss_def}
    \SurrogateLoss(\tensor{X}, \mat{r})
    &\defeq
    \sum_{n=1}^N \sum_{i_n = R_n + 1}^{I_n} \parens*{\sigma_{i_n}^{(n)}}^2.
\end{align}
\end{definition}

To summarize so far, 
for any core shape $\mat{r} \in [I_1] \times \dots \times [I_N]$,
we are guaranteed that
$
  \sfrac{1}{N} \cdot \SurrogateLoss(\tensor{X}, \mat{r}) \le
  \Loss(\tensor{X}, \mat{r}) \le \SurrogateLoss(\tensor{X}, \mat{r}).
$
We refer the reader to
\Cref{subsec:proxy_sandwhich} for the full details.

\subsection{Tucker packing problem}

Next, observe that the sum of squared singular values
across all mode-$n$ unfoldings of $\tensor{X}$
is
\begin{align*}
    \sum_{n=1}^N \sum_{i_n = 1}^{I_n} \parens*{\sigma_{i_n}^{(n)}}^2
    =
    \sum_{n=1}^N \norm{\mat{X}_{(n)}}_{\frobenius}^2
    =
    N \norm{\tensor{X}}_{\frobenius}^2.
\end{align*}
This means we can solve a singular value packing problem
instead by considering the complement of the surrogate loss.
The following lemma is a wrapper
for the truncated HOSVD error guarantees in \Cref{thm:tensor_subspace_approximation}.

\begin{restatable}[]{lemma}{HosvdPackingProblem}
\label{lem:hosvd_packing_problem}
For any tensor $\tensor{X} \in \R^{I_1 \times \dots \times I_N}$
and budget $c \ge 1 + \sum_{n=1}^N I_n$ for the size of the Tucker decomposition,
let the set of feasible core shapes be
\[
    F = \set*{\mat{r} \in [I_1] \times \cdots \times [I_N] : \prod_{n=1}^N R_n + \sum_{n=1}^N I_n R_n \le c}.
\]
Then, we have
\begin{align*}
    \widetilde{\mat{r}}^*
    \defeq
    \argmin_{\mat{r} \in F} \SurrogateLoss(\tensor{X}, \mat{r})
    =
    \argmax_{\mat{r} \in F} \sum_{n=1}^N \sum_{i_n=1}^{R_n} \parens*{\sigma_{i_n}^{(n)}}^2.
\end{align*}
Further, if 
$\mat{r}^* \defeq \argmin_{\mat{r} \in F} L(\tensor{X},\mat{r})$
is an optimal budget-constrained core shape, then
$
    \Loss\parens*{\tensor{X},\widetilde{\mat{r}}^*}
    \le
    N \cdot
    \Loss\parens*{\tensor{X},\mat{r}^*}.
$
\end{restatable}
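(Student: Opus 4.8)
The plan is to treat this lemma as a wrapper that simply carries the two-sided surrogate bounds of \Cref{thm:tensor_subspace_approximation} through the optimization, so essentially no new computation is required. There are two claims: an exact equivalence between the surrogate-loss minimizer and the packing-objective maximizer, and an $N$-factor approximation guarantee for the induced true Tucker loss.

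For the first claim, I would start from the sum-of-squared-singular-values identity stated immediately above the lemma, namely $\sum_{n=1}^N \sum_{i_n=1}^{I_n} (\sigma_{i_n}^{(n)})^2 = N\norm{\tensor{X}}_{\frobenius}^2$. Splitting each inner sum at the truncation threshold $R_n$ gives, for every feasible $\mat{r} \in F$,
\[
    \sum_{n=1}^N \sum_{i_n=1}^{R_n} \parens*{\sigma_{i_n}^{(n)}}^2 + \SurrogateLoss(\tensor{X}, \mat{r}) = N\norm{\tensor{X}}_{\frobenius}^2.
\]
Since the right-hand side is a constant independent of $\mat{r}$, minimizing $\SurrogateLoss(\tensor{X}, \cdot)$ over $F$ is equivalent to maximizing the packing objective $\sum_{n=1}^N \sum_{i_n=1}^{R_n} (\sigma_{i_n}^{(n)})^2$ over the same set $F$, which is exactly the claimed equality of $\argmin$ and $\argmax$.

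For the approximation guarantee, I would chain the sandwich bound $\frac{1}{N}\SurrogateLoss(\tensor{X}, \mat{r}) \le \Loss(\tensor{X}, \mat{r}) \le \SurrogateLoss(\tensor{X}, \mat{r})$ from \Cref{thm:tensor_subspace_approximation} together with the surrogate-optimality of $\widetilde{\mat{r}}^*$. Concretely,
\[
    \Loss(\tensor{X}, \widetilde{\mat{r}}^*) \le \SurrogateLoss(\tensor{X}, \widetilde{\mat{r}}^*) \le \SurrogateLoss(\tensor{X}, \mat{r}^*) \le N \cdot \Loss(\tensor{X}, \mat{r}^*),
\]
where the first inequality is the upper sandwich bound evaluated at $\widetilde{\mat{r}}^*$, the middle inequality holds because $\widetilde{\mat{r}}^*$ minimizes the surrogate loss over $F$ and $\mat{r}^* \in F$, and the last inequality is the lower sandwich bound evaluated at $\mat{r}^*$. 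There is no genuine obstacle here; the only point that requires care is that the budget-constrained feasible set $F$ is defined purely in terms of the shape $\mat{r}$ and is therefore identical for both the surrogate and the true optimization problems. This is precisely what makes the middle comparison legitimate, since $\widetilde{\mat{r}}^*$ and $\mat{r}^*$ are optimized over the same domain and only the objective differs.
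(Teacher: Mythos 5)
Your proposal is correct and follows essentially the same route as the paper's proof: the same complementation identity $\sum_{n}\sum_{i_n \le R_n}(\sigma_{i_n}^{(n)})^2 + \SurrogateLoss(\tensor{X},\mat{r}) = N\norm{\tensor{X}}_{\frobenius}^2$ for the argmin/argmax equivalence, and the same chain of sandwich bounds plus surrogate-optimality for the factor-$N$ guarantee. If anything, your write-up is slightly more explicit than the paper's, which leaves the first link $\Loss(\tensor{X},\widetilde{\mat{r}}^*) \le \SurrogateLoss(\tensor{X},\widetilde{\mat{r}}^*)$ implicit.
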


To find a core shape whose optimal Tucker decomposition
approximates the optimal loss
$L(\tensor{X}, \mat{r}^*)$ subject to a size  constraint,
we solve the maximization problem
in \Cref{lem:hosvd_packing_problem}.
Optimizing this proxy objective
is substantially less expensive than methods that rely on
rank-$\mat{r}$ Tucker decomposition solvers as a subroutine.
We formalize this idea by introducing the more general problem below.

\begin{definition}[Tucker packing problem]
Given a shape $(I_1, \dots, I_N) \in \Z_{\ge 1}^N$,
$N$ non-increasing sequences
$\mat{a}^{(n)} \in \R_{\ge 0}^{I_n}$,
and a budget $c \ge 1$,
the \emph{Tucker packing problem}
asks to find a core shape
$\mat{r} \in [I_1] \times \dots \times [I_N]$ that solves:
\begin{tcolorbox}[enhanced, opacityframe=1, colback=white!99!black, top=-0.8mm]
\begin{align}
   &\text{maximize~~} \sum_{n=1}^N \sum_{i_n=1}^{R_n} a_{i_n}^{(n)} \label{eq:tucker-packing} \\
   &\text{subject to~~} \prod_{n=1}^N R_n + \sum_{n=1}^N I_n R_n \le c \label{eq:tucker-packing-constraint}
\end{align}
\end{tcolorbox}
\end{definition}

We also denote the objective by
$
    f(\mat{r}) \defeq \sum_{n=1}^N \sum_{i_n=1}^{R_n} a_{i_n}^{(n)}.
$

\begin{restatable}[]{theorem}{Hardness}
\label{thm:hardness}
The Tucker packing problem is \textnormal{NP-hard}.
\end{restatable}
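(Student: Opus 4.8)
The plan is to prove NP-hardness of the associated decision problem --- given the data $(I_1,\dots,I_N)$, the sequences $\mat{a}^{(n)}$, a budget $c$, and a threshold $\theta$, decide whether some feasible core shape $\mat{r}$ satisfies $f(\mat{r}) \ge \theta$ --- by a polynomial-time reduction from a known NP-hard number problem. The key observation that should guide the choice of source problem is that the difficulty is concentrated entirely in the multiplicative core-tensor term $\prod_{n=1}^N R_n$ of constraint \cref{eq:tucker-packing-constraint}: if this term were dropped, we would be left with a separable concave objective (each $g_n(R_n) \defeq \sum_{i_n \le R_n} a_{i_n}^{(n)}$ is concave and nondecreasing because $\mat{a}^{(n)}$ is non-increasing) under a single linear knapsack constraint $\sum_{n} I_n R_n \le c$, which is solvable in polynomial time by a greedy marginal-allocation argument (there are at most $\sum_n I_n$ unit increments). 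The product term makes the feasible region non-convex --- its boundary is governed by $\sum_n \log R_n \le \log(\cdot)$ --- and it is precisely this multiplicative structure that I would exploit to embed a combinatorial selection. Accordingly, I would reduce from a \emph{product-type} selection problem such as Subset Product, which is NP-complete via the standard prime encoding of Exact Cover by $3$-Sets.

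The construction I would use creates one dimension per item of the source instance $\set{b_1,\dots,b_m}$ with target product $B$. Under the prime encoding, every $b_n$ is a product of a constant number of small primes and is therefore polynomially bounded, so I can afford to set each $I_n$ equal to (a polynomial function of) $b_n$; this keeps every reward sequence $\mat{a}^{(n)}$ of polynomial length, even though the attainable products $\prod_n R_n$ range over exponentially many values and the budget $c$, written in binary, may be exponentially large. I would design each non-increasing sequence $\mat{a}^{(n)}$ together with $c$ so that, at optimality, dimension $n$ contributes a factor to $\prod_n R_n$ equal to either $1$ (``item excluded'') or $b_n$ (``item included''), and so that the additive term $\sum_n I_n R_n$ is absorbed into a controlled amount of slack in $c$. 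The threshold $\theta$ and the budget $c$ are then calibrated so that a feasible core shape attains objective value $\ge \theta$ if and only if the included items have product exactly $B$, i.e.\ iff the Subset Product instance is a YES instance.

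The step I expect to be the main obstacle is forcing the per-dimension decision to be effectively \emph{binary}. Because each $g_n$ is concave and nondecreasing, the reward alone can never make an intermediate value $1 < R_n < b_n$ strictly worse than both endpoints --- concavity always grants a medium $R_n$ at least the linear interpolation of the endpoint rewards --- so the binary structure cannot be enforced through the objective and must instead be extracted from the non-convex product constraint. I would address this with a structural ``canonicalization'' lemma showing that some optimal core shape uses only the extreme values $\set{1,b_n}$ in each coordinate, arguing by an exchange/rounding step that any intermediate $R_n$ can be pushed to an endpoint without decreasing $f(\mat{r})$ while still respecting \cref{eq:tucker-packing-constraint}; the coprimality properties of the prime encoding then guarantee that $\prod_n R_n$ can reach the target $B$ only through a genuine subset whose product is exactly $B$. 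The remaining work --- verifying both directions of the equivalence, checking that the linear term never lets a spurious selection slip under budget, and confirming that every number has polynomial bit-length --- should be routine once this canonicalization and calibration are in place.
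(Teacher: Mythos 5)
Your plan is genuinely different from the paper's proof (which reduces from EQUIPARTITION rather than Subset Product), but as sketched it has two gaps that are not routine, and both sit exactly at the points you defer. First, the canonicalization step cannot be carried out by the exchange argument you describe. Because each $\mat{a}^{(n)}$ must be non-increasing, $g_n(i) \defeq \sum_{j\le i} a_j^{(n)}$ is concave, so any reward placed at $R_n = b_n$ accrues at intermediate values at least along the chord, $g_n(i)-g_n(1) \ge \frac{i-1}{b_n-1}\bigl(g_n(b_n)-g_n(1)\bigr)$; a local push of an intermediate $R_n$ to an endpoint then either violates \cref{eq:tucker-packing-constraint} (pushing up to $b_n$) or strictly decreases $f$ (pushing down to $1$), so no exchange preserves both feasibility and objective. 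What can be made to work is a global accounting argument with a specific choice of rewards: take constant increments so that $g_n(i)-g_n(1)$ equals the chord of the logarithm, $\frac{i-1}{b_n-1}\log b_n$; concavity of $\log$ then gives the pointwise bound $g_n(i)-g_n(1)\le \log i$, with equality only at $i\in\{1,b_n\}$, hence every feasible shape has objective at most $\log$ of the product cap, with equality only for endpoint solutions of product exactly $B$. But this is a different argument from the one you propose, it is the crux of the reduction rather than a detail, and it forces irrational rewards, so you additionally need rational approximations together with a quantitative gap bound (of order $1/B$) that your sketch never addresses. Second, ``absorbing the additive term into slack'' breaks the reduction outright: \cref{eq:tucker-packing-constraint} is a single inequality, so slack reserved for $\sum_n I_n R_n$ can instead be spent on $\prod_n R_n$. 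Since different exact covers have different additive costs, the slack must be polynomially large, and on a NO instance a subset whose product is, say, $B+1$ becomes feasible and, under log-type rewards, exceeds the threshold $\theta \approx \log B$. Closing this hole needs a real gadget (e.g., an auxiliary dimension forced by huge uniform rewards to take $R = K$ with $K$ exceeding the additive slack, so achievable product-term values jump in steps of $K$), not the ``routine'' verification you anticipate.

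There is also an error in the premise guiding your construction: dropping the product term does \emph{not} leave a polynomial-time problem. Taking $I_n = w_n$, $a_2^{(n)} = v_n$, and $a_i^{(n)} = 0$ for $i\ge 3$ makes the additive-only problem exactly 0-1 knapsack, which is (weakly) NP-hard and for which greedy marginal allocation is not optimal; greedy is only correct under unit costs. The paper's reduction in fact does the opposite of your plan: it zeroes all rewards past index $2$, so optimal shapes satisfy $R_n \in \{1,2\}$ for free (this is how binary choices are forced, sidestepping your concavity obstacle entirely), it uses the product term $2^{\abs{S}}$ only to cap the \emph{number} of selected dimensions, and it encodes the numeric weights in the additive factor-matrix term. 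That interplay --- one constraint on cardinality, one on a weight sum --- is precisely why EQUIPARTITION is the right source problem there, and why none of the issues above arise in the paper's argument.
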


We prove this result in~\Cref{app:hardness}
with an intricate reduction from the EQUIPARTITION problem
(see, e.g., \citet[SP12]{garey1979computers}).

NP-hardness motivates the need for efficient approximation algorithms.
In \Cref{sec:algorithm}, we develop a
\emph{polynomial-time approximation scheme} (PTAS) for the Tucker packing problem.
We leave the existence of a fully-polynomial time approximation scheme (FPTAS)
as a challenging open question for future works.

To conclude,
since Tucker packing is the complement of surrogate loss minimization,
we must quantify how a $(1-\varepsilon)$-approximation for the packing problem
can affect the error incurred in the surrogate loss.
We explain this in detail in \Cref{app:complement_approximation_error}
and present the main idea below.

\begin{restatable}[]{lemma}{ComplementApproximationError}
\label{lem:complement_approximation_error}
Let $\mat{r} \in [I_1] \times \dots \times [I_N]$ be any core shape
that achieves a $(1-\varepsilon / N)$-approximation to the Tucker packing problem.
Then, we have
$
    \textnormal{RRE}(\tensor{X}, \mat{r})
    \le
    N \cdot \textnormal{RRE}(\tensor{X}, \mat{r}^*) + \varepsilon,
$    
where $\textnormal{RRE}(\tensor{X}, \mat{r}) := \Loss(\tensor{X}, \mat{r})/\norm{\tensor{X}}_{\frobenius}^2$.
\end{restatable}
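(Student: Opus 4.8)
The plan is to exploit the complementarity between the packing objective $f$ and the surrogate loss $\SurrogateLoss$, and then to chain the approximation guarantee together with the sandwich bound $\frac{1}{N}\SurrogateLoss(\tensor{X},\mat{r}) \le \Loss(\tensor{X},\mat{r}) \le \SurrogateLoss(\tensor{X},\mat{r})$ from \Cref{thm:tensor_subspace_approximation}.

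First I would record the identity that makes Tucker packing the exact complement of surrogate-loss minimization. In the HOSVD instance, where $a^{(n)}_{i_n} = \parens*{\sigma^{(n)}_{i_n}}^2$, the computation preceding the Tucker packing definition gives $\sum_{n=1}^N \sum_{i_n=1}^{I_n} \parens*{\sigma^{(n)}_{i_n}}^2 = N\norm{\tensor{X}}_{\frobenius}^2$, so that
\[
    f(\mat{r}) + \SurrogateLoss(\tensor{X}, \mat{r}) = N\norm{\tensor{X}}_{\frobenius}^2
\]
for every core shape $\mat{r}$. Writing $\widetilde{\mat{r}}^*$ for the exact packing optimum from \Cref{lem:hosvd_packing_problem}, the $(1-\varepsilon/N)$-approximation hypothesis reads $f(\mat{r}) \ge \parens*{1 - \varepsilon/N} f(\widetilde{\mat{r}}^*)$.

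Next I would convert this multiplicative guarantee on $f$ into an additive bound on the surrogate loss. Substituting the complementarity identity on both sides and simplifying, the approximation hypothesis becomes
\[
    \SurrogateLoss(\tensor{X}, \mat{r}) = N\norm{\tensor{X}}_{\frobenius}^2 - f(\mat{r}) \le \SurrogateLoss(\tensor{X}, \widetilde{\mat{r}}^*) + \frac{\varepsilon}{N} f(\widetilde{\mat{r}}^*).
\]
The additive slack is controlled because $f(\widetilde{\mat{r}}^*) \le N\norm{\tensor{X}}_{\frobenius}^2$ is bounded by the full sum of squared singular values, so $\frac{\varepsilon}{N} f(\widetilde{\mat{r}}^*) \le \varepsilon \norm{\tensor{X}}_{\frobenius}^2$ and hence $\SurrogateLoss(\tensor{X}, \mat{r}) \le \SurrogateLoss(\tensor{X}, \widetilde{\mat{r}}^*) + \varepsilon\norm{\tensor{X}}_{\frobenius}^2$.

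Finally I would pass from $\widetilde{\mat{r}}^*$ to $\mat{r}^*$ and from the surrogate loss back to the true loss. Since $\widetilde{\mat{r}}^*$ minimizes $\SurrogateLoss$ over the feasible set $F$ by \Cref{lem:hosvd_packing_problem} and $\mat{r}^* \in F$, we have $\SurrogateLoss(\tensor{X}, \widetilde{\mat{r}}^*) \le \SurrogateLoss(\tensor{X}, \mat{r}^*) \le N\Loss(\tensor{X}, \mat{r}^*)$, where the last inequality rearranges the lower bound $\frac{1}{N}\SurrogateLoss(\tensor{X}, \mat{r}^*) \le \Loss(\tensor{X}, \mat{r}^*)$ of \Cref{thm:tensor_subspace_approximation}. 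Combining this with $\Loss(\tensor{X}, \mat{r}) \le \SurrogateLoss(\tensor{X}, \mat{r})$ and dividing through by $\norm{\tensor{X}}_{\frobenius}^2$ yields the stated bound on $\textnormal{RRE}(\tensor{X}, \mat{r})$. The one step that takes care is the multiplicative-to-additive conversion: it succeeds precisely because the packing value is bounded above by $N\norm{\tensor{X}}_{\frobenius}^2$, which cancels the factor $1/N$ in the approximation ratio and leaves a clean additive $\varepsilon$ in the RRE rather than an $\varepsilon$ scaled by the (potentially large) loss.
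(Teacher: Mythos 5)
Your proof is correct and follows essentially the same route as the paper's: both use the complementarity identity $f(\mat{r}) + \SurrogateLoss(\tensor{X},\mat{r}) = N\norm{\tensor{X}}_{\frobenius}^2$ to turn the $(1-\varepsilon/N)$ packing guarantee into an additive surrogate-loss bound, control the slack via $f(\widetilde{\mat{r}}^*) \le N\norm{\tensor{X}}_{\frobenius}^2$ (the paper writes this same bound as $\varepsilon'(N - \SurrogateLoss(\tensor{X},\widetilde{\mat{r}}^*)/\norm{\tensor{X}}_{\frobenius}^2) \le \varepsilon$), and then chain $\Loss(\tensor{X},\mat{r}) \le \SurrogateLoss(\tensor{X},\mat{r})$, $\SurrogateLoss(\tensor{X},\widetilde{\mat{r}}^*) \le \SurrogateLoss(\tensor{X},\mat{r}^*) \le N\Loss(\tensor{X},\mat{r}^*)$ and divide by $\norm{\tensor{X}}_{\frobenius}^2$.
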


\begin{remark}
We can obtain global approximation guarantees for Tucker decomposition reconstruction error
by (1) finding an approximately optimal core shape,
(2) running \texttt{TuckerHOSVD} to initialize the Tucker decomposition,
and (3) using alternating least squares (ALS) to improve the tensor decomposition.
This is analogous to how $k$-means++ enhances Lloyd's algorithm~\citep{arthur2006k}.
\end{remark}
\section{Algorithm}
\label{sec:algorithm}

\subsection{Warm-up:\hspace{-0.02cm} Connections to multiple-choice knapsack}

To start, consider a simplified version of the Tucker packing problem
that only accounts for the size of the core tensor,
i.e., the factor matrices do not use any of the budget.
We show that after two simple transformations this new problem reduces
to the \emph{multiple-choice knapsack problem},\footnote{The multiple-choice knapsack problem is a 0-1 knapsack problem where the items are
partitioned into $N$ classes and exactly one item must be taken from each class~\cite{sinha1979multiple}.} which is NP-hard~\citep{pisinger1995minimal}
but has an FPTAS~\citep{lawler1977fast}.

Concretely, the optimization problem is
\begin{align}
   \text{maximize~~} &\sum_{n=1}^N \sum_{i_n=1}^{R_n} a_{i_n}^{(n)} \label{eq:tucker-packing-simplified} \\
   \text{subject to~~} &\prod_{n=1}^N R_n \le c \label{eq:tucker-packing-constraint-simplified}
\end{align}

\paragraph{Prefix sums transformation.}
To get closer to a {0-1} knapsack problem, define new coefficients
by taking the prefix sums of the $a_{i_n}^{(n)}$'s, for each
$n \in [N]$ and $i_n \in [I_n]$:
\begin{align*}
    p_{i_n}^{(n)} \defeq \sum_{j_n=1}^{i_n} a_{j_n}^{(n)}.
\end{align*}
This ``core size-only'' Tucker packing problem can
be reformulated as the following integer program:
\begin{align}
   \text{maximize~~} & \sum_{n=1}^N \sum_{i_n=1}^{I_n} p_{i_n}^{(n)} x_{i_n}^{(n)} \notag \\
   \text{subject to~~} & \prod_{n=1}^N \sum_{i_n=1}^{I_n} i_n x_{i_n}^{(n)} \le c \label{eq:tucker-packing-const-conc-before} \\
   &
   \sum_{i_n=1}^{I_n} x_{i_n}^{(n)} = 1
   \hspace{1cm} \forall n \in [N] \notag \\
   & x_{i_n}^{(n)} \in \{0, 1\} \hspace{1.11cm}\forall n\in[N], i_n \in [I_n] \notag
\end{align}
We optimize over $i_n$ instead of $R_n$ for notational brevity.

\paragraph{Log transformation.}
Next, replace constraint~\eqref{eq:tucker-packing-const-conc-before} with the linear inequality
\begin{align*}
    \sum_{n=1}^N \sum_{i_n=1}^{I_n} \log(i_n) x_{i_n}^{(n)} \le \log(c).
\end{align*}
This substitution is valid because in any feasible solution,
for each $n\in[N]$, exactly one of $x_1^{(n)}, x_2^{(n)}, \dots,x_{I_n}^{(n)}$ is equal to one and the rest are zero.
Putting everything together,
this core size-only Tucker packing problem is the following multiple-choice knapsack problem:
\begin{align}
   \text{maximize~~} & \sum_{n=1}^N \sum_{i_n=1}^{I_n} p_{i_n}^{(n)} x_{i_n}^{(n)} \label{eq:tucker-packing-int-prog-obj} \\
   \text{subject to~~} & \sum_{n=1}^N \sum_{i_n=1}^{I_n} \log(i_n) x_{i_n}^{(n)} \le \log(c) \notag \\ 
   &
   \sum_{i_n=1}^{I_n} x_{i_n}^{(n)} = 1 \hspace{1cm} \forall n \in [N]
   \notag \\
   &
   x_{i_n}^{(n)} \in \{0, 1\} \hspace{1.11cm} \forall n\in[N], i_n \in [I_n] 
   \notag 
\end{align}

\begin{restatable}[\citealt{lawler1977fast}]{theorem}{multiChoiceKnapsack}
\label{thm:multi-choice-knapsack}
There exists an algorithm that computes a $(1-\epsilon)$-approximation to problem \eqref{eq:tucker-packing-int-prog-obj} in time and space
$O(N^2 \varepsilon^{-1} \sum_{n=1}^N I_n )$.
\end{restatable}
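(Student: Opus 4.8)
The plan is to recognize \eqref{eq:tucker-packing-int-prog-obj} as an instance of the \emph{multiple-choice knapsack problem} with $N$ classes, where class $n$ contains the $I_n$ items $\{(n,i_n)\}_{i_n \in [I_n]}$ with profit $p_{i_n}^{(n)}$ and weight $\log(i_n)$, the capacity is $\log c$, and exactly one item must be chosen from each class. Because the weights $\log(i_n)$ are real numbers, a pseudo-polynomial dynamic program on weights is unavailable, so I would run the dynamic program on \emph{profits} and then apply profit scaling to obtain an FPTAS, following the template of \citet{lawler1977fast}. Concretely, let $g_n(v)$ denote the minimum total weight achievable by selecting one item from each of the first $n$ classes so that the selected profits sum to exactly $v$, with recurrence $g_n(v) = \min_{i_n \in [I_n]} \{ g_{n-1}(v - p_{i_n}^{(n)}) + \log(i_n) \}$ and base case $g_0(0) = 0$. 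The optimal value is the largest $v$ with $g_N(v) \le \log c$, and filling this table costs $O(V \sum_{n=1}^N I_n)$, where $V$ bounds the number of distinct attainable profit totals.

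To make this polynomial, I would first pin down the magnitude of $\OPT$. Discarding any item with $i_n > c$ (infeasible on its own, hence in any feasible solution), every remaining $I_n \le c$, so setting a single class $n$ to its largest surviving index and all other classes to index $1$ (weight $0$) is feasible; since the prefix sums $p_{i_n}^{(n)}$ are nondecreasing in $i_n$, this yields $\OPT \ge \hat{L} := \max_{n \in [N]} p_{I_n}^{(n)}$. Simultaneously $\OPT \le P_{\max} := \sum_{n=1}^N p_{I_n}^{(n)} \le N \hat{L}$. I would then choose the scaling factor $K := \varepsilon \hat{L} / N$, round profits down to $\bar{p}_{i_n}^{(n)} := \lfloor p_{i_n}^{(n)} / K \rfloor$, and run the profit dynamic program on the integer profits $\bar{p}$ while leaving the weights (and hence feasibility) untouched.

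For correctness, rounding each profit down to a multiple of $K$ decreases it by less than $K$, and any feasible selection touches exactly $N$ items, so the scaled objective underestimates the true objective by less than $NK = \varepsilon \hat{L} \le \varepsilon \OPT$. Since profit-only rounding preserves feasibility and the dynamic program returns an exact optimum of the scaled instance, the recovered core shape has true profit at least $\OPT - \varepsilon \OPT = (1 - \varepsilon) \OPT$. For the running time, the scaled profit totals range over $\{0, 1, \dots, \bar{V}\}$ with $\bar{V} \le P_{\max} / K \le N \hat{L} \cdot N / (\varepsilon \hat{L}) = N^2 / \varepsilon$, so the dynamic program runs in time and space $O(\bar{V} \sum_{n=1}^N I_n) = O(N^2 \varepsilon^{-1} \sum_{n=1}^N I_n)$, matching the claim.

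The main obstacle is the calibration of $K$: the scheme hinges on sandwiching $\OPT$ between the computable quantities $\hat{L}$ and $P_{\max} \le N \hat{L}$, which is precisely what simultaneously forces the additive rounding error below $\varepsilon \OPT$ and caps the scaled profit range at $N^2 / \varepsilon$ (the source of the quadratic factor). Everything else—the correctness of the profit recurrence and the preservation of feasibility under profit-only rounding—is routine, so the real care lies in this two-sided estimate and in checking that discarding items with $i_n > c$ leaves the single-class lower bound $\hat{L}$ feasible.
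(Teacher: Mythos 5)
Your proposal is correct and matches the approach the paper relies on: the paper gives no proof of its own, deferring to \citet{lawler1977fast} and summarizing the technique in one line --- uniformly downscale the objective coefficients $p_{i_n}^{(n)}$, round, and run dynamic programming --- which is precisely the profit-scaling FPTAS you reconstruct. Your calibration $\hat{L} \le \OPT \le N\hat{L}$, the scaling factor $K = \varepsilon\hat{L}/N$, and the resulting scaled-profit range of $N^2/\varepsilon$ correctly account for the stated $O(N^2\varepsilon^{-1}\sum_{n=1}^N I_n)$ time and space bound.
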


The FPTAS in~\Cref{thm:multi-choice-knapsack}
for multiple-choice knapsack
uniformly downscales all coefficients $p_{i_n}^{(n)}$ in the objective,
rounds them,
and then uses dynamic programming.

\subsection{PTAS for the Tucker packing problem}

Now we consider the true cost of a Tucker decomposition,
i.e., the size of the core tensor and the factor matrices.
We first introduce a simple grid-search algorithm
that solves approximate Tucker packing for a
general class of feasible solutions (i.e., \emph{downwards closed sets}).
This captures the Tucker packing problem
and will be useful for extending our results to tree tensor networks in~\Cref{sec:htucker}.

\begin{definition}
For any $N \ge 1$ and $(I_1, \dots, I_N) \in \Z_{\ge 1}^N$,
let $F \subseteq [I_1] \times \dots \times [I_N]$.
The set $F$ is \emph{downward closed} if for any pair $(R_1,\dots,R_N),(R'_1,\dots,R'_N) \in [I_1] \times \dots \times [I_N]$ such that $R'_n \leq R_n$ for all $n\in[N]$, $(R_1,\ldots,R_N)\in F$ implies that $(R'_1,\ldots,R'_N)\in F$.
\end{definition}

\begin{restatable}{lemma}{GridSearch}
\label{lem:grid-search}
Let $0 < \varepsilon \le 1$ and $F \subseteq [I_1] \times \dots \times [I_N]$ be downwards closed.
For each $n\in[N]$, define
\[
    S_n^{(\varepsilon)} = \set*{\ceil{(1+\epsilon)^k}:k\in\mathbb{Z}_{\geq 0}, \ceil{(1+\epsilon)^k} \leq I_n}.
\]
Let $\mat{r}^*$ be an optimal solution to the generalized problem 
\begin{align}
\label{eq:comb_opt_whole_rep_prob}
\textnormal{maximize~~} & \sum_{n=1}^N \sum_{i_n=1}^{R_n} a_{i_n}^{(n)} \\ \nonumber
   \textnormal{subject to~~} & (R_1,\ldots,R_N)\in F 
\end{align}
and let $\mat{r}^{(\varepsilon)} = (R^{(\varepsilon)}_1,\ldots,R^{(\varepsilon)}_N)$ be an optimal solution to
\begin{align}
\label{eq:comb_opt_whole_rep_grid_prob}
\textnormal{maximize~~} & \sum_{n=1}^N \sum_{i_n=1}^{R_n} a_{i_n}^{(n)} \\
\nonumber
   \textnormal{subject to~~} & (R_1,\ldots,R_N)\in (S^{(\varepsilon)}_1 \times \dots \times S^{(\varepsilon)}_N) \cap F \nonumber
\end{align}
Then, $f(\mat{r}^{(\varepsilon)}) \ge (1 + \varepsilon)^{-1} f(\mat{r}^*)$.
Further, there is an algorithm that finds an optimal solution of \eqref{eq:comb_opt_whole_rep_grid_prob} with running time
$
    O\parens{\sum_{n=1}^N I_n + \varepsilon^{-N} \prod_{n=1}^N \parens{1 + \log_{2}(I_n)}}.
$
\end{restatable}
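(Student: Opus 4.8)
The plan is to prove the two claims separately: the approximation factor by rounding the optimum $\mat{r}^*$ coordinatewise down onto the grid, and the running time by bounding the grid size together with a prefix-sum precomputation. Throughout I use that each sequence $\mat{a}^{(n)}$ is non-increasing and non-negative (as in the Tucker packing problem), and I write $g_n(R) \defeq \sum_{i=1}^{R} a_i^{(n)}$, so that $f(\mat{r}) = \sum_{n=1}^N g_n(R_n)$ is \emph{additive} across coordinates. This additivity is exactly why the per-coordinate rounding errors add rather than compound, yielding a single $(1+\varepsilon)$ factor instead of $(1+\varepsilon)^N$.

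For each $n$ I would define $R_n'$ to be the largest element of $S_n^{(\varepsilon)}$ with $R_n' \le R_n^*$; this exists since $\ceil{(1+\varepsilon)^0} = 1 \le R_n^*$ and $1 \in S_n^{(\varepsilon)}$. As $R_n' \le R_n^*$ for every $n$ and $\mat{r}^* \in F$ with $F$ downward closed, the point $\mat{r}' = (R_1',\dots,R_N')$ lies in $F$; it also lies in $S_1^{(\varepsilon)} \times \dots \times S_N^{(\varepsilon)}$ by construction, so $\mat{r}'$ is feasible for \eqref{eq:comb_opt_whole_rep_grid_prob} and hence $f(\mat{r}^{(\varepsilon)}) \ge f(\mat{r}')$. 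By maximality of $R_n'$, the next grid value exceeds $R_n^*$, i.e. $\ceil{(1+\varepsilon)^{k+1}} \ge R_n^* + 1$, which forces $(1+\varepsilon)^{k+1} > R_n^*$ and therefore $R_n' \ge (1+\varepsilon)^k > R_n^*/(1+\varepsilon)$. Because $\mat{a}^{(n)}$ is non-increasing and non-negative, the prefix average $g_n(R)/R$ is non-increasing in $R$, so $g_n(R_n')/R_n' \ge g_n(R_n^*)/R_n^*$; combining gives $g_n(R_n') \ge (R_n'/R_n^*)\, g_n(R_n^*) \ge (1+\varepsilon)^{-1} g_n(R_n^*)$. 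Summing over $n$ and using additivity yields $f(\mat{r}') \ge (1+\varepsilon)^{-1} f(\mat{r}^*)$, and with $f(\mat{r}^{(\varepsilon)}) \ge f(\mat{r}')$ this proves the approximation claim.

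For the running time, I would first bound $|S_n^{(\varepsilon)}|$. The number of exponents $k$ with $\ceil{(1+\varepsilon)^k} \le I_n$ is at most $1 + \log_{1+\varepsilon}(I_n) = 1 + \log_2(I_n)/\log_2(1+\varepsilon)$, and since $\log_2$ is concave with $\log_2(1+\varepsilon) \ge \varepsilon$ for $\varepsilon \in (0,1]$ (the chord from $(1,0)$ to $(2,1)$ lies below the curve), this is at most $\varepsilon^{-1}(1+\log_2 I_n)$. The algorithm precomputes the prefix sums $g_n(\cdot)$ for all $n$ in $O(\sum_{n=1}^N I_n)$ time, constructs each $S_n^{(\varepsilon)}$, and then enumerates the product grid $S_1^{(\varepsilon)} \times \dots \times S_N^{(\varepsilon)}$, whose cardinality is at most $\varepsilon^{-N}\prod_{n=1}^N(1+\log_2 I_n)$, evaluating $f$ at each grid point (maintained incrementally via a nested enumeration, so each point costs $O(1)$ amortized) and keeping the best feasible one. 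This gives the stated bound $O\parens{\sum_{n=1}^N I_n + \varepsilon^{-N}\prod_{n=1}^N(1+\log_2 I_n)}$.

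The main obstacle, and the crux of the argument, is the per-coordinate step: extracting the clean bound $R_n' > R_n^*/(1+\varepsilon)$ from the ceiling-rounded geometric grid, and pairing it with the prefix-average monotonicity of a non-increasing sequence. Both facts are elementary but must be handled carefully (the ceilings, and the essential use of the non-increasing hypothesis---for an arbitrary non-negative sequence with its mass concentrated in the tail, rounding the prefix length down could destroy the objective entirely). Everything else is bookkeeping: feasibility via downward-closedness, additivity of $f$, and a routine grid-size estimate.
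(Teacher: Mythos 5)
Your proof is correct and takes essentially the same route as the paper's: round $\mat{r}^*$ coordinatewise down onto the grid, use downward-closedness for feasibility, control the per-coordinate loss by an averaging argument on the non-increasing sequences, and bound the running time via the same grid-size estimate ($\log_2(1+\varepsilon)\ge\varepsilon$) plus prefix sums with amortized $O(1)$ enumeration. The only cosmetic difference is the per-coordinate inequality: the paper proves $(1+\varepsilon)\sum_{i=1}^{\widehat{R}_n} a_i^{(n)} \ge \sum_{i=1}^{\lfloor (1+\varepsilon)\widehat{R}_n\rfloor} a_i^{(n)}$ together with $\lfloor (1+\varepsilon)\widehat{R}_n\rfloor \ge R_n^*$, whereas you use monotonicity of the prefix average $g_n(R)/R$ --- two equivalent elementary facts established by the same averaging step.
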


The time complexity in~\Cref{lem:grid-search} is exponential in the order
of the tensor, so we now focus on designing our main algorithm
$\texttt{TuckerPackingSolver}$, whose running time is
$O(\text{poly}(N, \log c, I_n))$
for constant values of $\varepsilon > 0$.

\paragraph{Algorithm description.}
There are two phases in this algorithm:
(1) exhaustively search over all ``small'' core shapes
by trying all budget allocation splits when the factor matrix cost is low;
and
(2) try coarser splits between
the core tensor size and factor matrix costs.
In the large phase,
we show that it is sufficient to consider
$O(\log_{1+\varepsilon} c)$ such splits.
Each budget split induces a problem of the form \eqref{eq:one-large-dim-prob},
which after applying prefix sum and log transformations becomes a
familiar \emph{2-dimensional knapsack problem with a partition matroid constraint}:
\begin{align}
   \text{maximize~~} & \sum_{n=1}^N \sum_{i_n=1}^{I_n} p_{i_n}^{(n)} x_{i_n}^{(n)} \label{eq:tucker-packing-split-int-prog-obj} \\
   \text{subject to~~} & \sum_{n=1}^N \sum_{i_n=1}^{I_n} \log(i_n) x_{i_n}^{(n)} \le \log(c_{\text{core}}) \notag \\
   & \sum_{n=1}^N \sum_{i_n=1}^{I_n} I_n i_n x_{i_n}^{(n)} \le c - c_{\text{core}} \notag \\
   &
   \sum_{i_n=1}^{I_n} x_{i_n}^{(n)} = 1 \hspace{1cm} \forall n \in [N]
   \notag \\
   &
   x_{i_n}^{(n)} \in \{0, 1\} \hspace{1.11cm} \forall n\in[N], i_n \in [I_n] 
   \notag
\end{align}

More generally, \Cref{eq:tucker-packing-split-int-prog-obj} is a
\emph{$d$-budgeted matroid independent set problem}
(see, e.g., \citet{grandoni2014new}),
in which a linear objective function is maximized subject to $d$ knapsack constraints and a matroid constraint.
Recall that the multi-dimensional knapsack problem (even without any matroid constraints)
does not admit an FPTAS unless $\text{P} = \text{NP}$~\citep{gens1979computational, korte1981existence}.
It does, however, have a PTAS as shown by the next theorem.

\begin{theorem}[{\citealt[Corollary 4.4]{grandoni2014new}}]
\label{thm:grandoni2014new}
There is a PTAS (i.e., a $(1-\varepsilon)$-approximation algorithm) for
the $d$-budgeted matroid independent set problem with running time $O(m^{O(d^2/\varepsilon)})$,
where $m$ is the number of items.
\end{theorem}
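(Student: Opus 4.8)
Because \Cref{thm:grandoni2014new} is quoted verbatim from \citet{grandoni2014new}, the most direct ``proof'' is simply to invoke that reference as a black box. To sketch a self-contained argument, though, I would combine \emph{partial enumeration} with \emph{LP rounding}, exploiting the integrality of the matroid polytope. The crux is a polyhedral structural fact, which I would establish first: let $P_{\cM} \subseteq [0,1]^m$ denote the matroid independence polytope and let $Q = P_{\cM} \cap \{x : \inner{c^{(j)}, x} \le B_j, \, j \in [d]\}$ be its intersection with the $d$ budget halfspaces. Since $P_{\cM}$ is an integral polytope whose tight rank constraints at any point can be chosen to form a laminar family, any vertex of $Q$ has at most $d$ fractional coordinates: the extra $d$ linear constraints are the only possible source of fractionality, and a counting argument on the tight, linearly independent constraints at the vertex pins down all but at most $d$ coordinates to integral values.

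Given this lemma, the algorithm proceeds in two stages. First, I would guess (by exhaustive enumeration) the most profitable elements of an optimal integral solution: all elements whose individual profit exceeds a $\Theta(\varepsilon/d)$ fraction of $\OPT$, together with a geometric guess of $\OPT$ itself. Only $O(d/\varepsilon)$ such elements can appear in any feasible set, so enumerating their candidate identities costs $m^{O(d/\varepsilon)}$; the stated $m^{O(d^2/\varepsilon)}$ bound arises after accounting for the interaction between the $d$ budgets and the profit scaling, which is exactly where I would follow the reference's bookkeeping carefully. For each guess I would contract the chosen elements into the matroid and decrement the $d$ budgets accordingly, leaving a residual instance in which every remaining element has ``small'' profit by construction.

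Second, on the residual instance I would solve the LP relaxation $\max\{\inner{w,x} : x \in Q'\}$ over the reduced polytope $Q'$, take an optimal vertex $x^\star$, and round down by discarding its at most $d$ fractional coordinates. The resulting integral point is independent in the contracted matroid and satisfies all $d$ budgets, so re-adding the guessed elements yields a feasible set for the original instance. The analysis then shows the rounding loses at most $d \cdot \Theta(\varepsilon/d) \cdot \OPT = \Theta(\varepsilon) \cdot \OPT$, since each of the $\le d$ discarded elements has small profit, giving a $(1-\varepsilon)$-approximation after rescaling $\varepsilon$.

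The main obstacle is the polyhedral lemma itself: both proving the bound on the number of fractional coordinates of a vertex of $Q$, and ensuring that the enumeration threshold interacts with that bound to produce precisely the claimed $m^{O(d^2/\varepsilon)}$ running time rather than a weaker exponent. Everything downstream — solving the LP, contracting the matroid, decrementing budgets, and bounding the discarded profit — is routine once the structural guarantee and the guessing step are in place.
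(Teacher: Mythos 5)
This theorem is imported verbatim from \citet[Corollary 4.4]{grandoni2014new} and the paper supplies no proof of it whatsoever, so your primary move---invoking the reference as a black box---is exactly what the paper does, and your partial-enumeration-plus-LP-rounding sketch is indeed the technique behind the cited result. One correction to what you call the crux: a vertex of the matroid polytope intersected with $d$ budget halfspaces can have up to $2d$ fractional coordinates, not $d$ (already for $d=1$, intersecting the rank constraint $x_1 + x_2 \le 1$ of the uniform matroid $U_{1,2}$ with the budget $x_1 + 2x_2 \le 3/2$ produces the vertex $(1/2, 1/2)$); the standard chain/counting argument on tight, linearly independent constraints yields $2d$, and your downstream accounting survives unchanged after adjusting the profit-threshold for enumeration by a factor of two.
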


The number of items in~\Cref{eq:tucker-packing-split-int-prog-obj} is $m=\sum_{n=1}^N I_n$,
one for each core shape dimension choice.
Thus, since $d=2$, this gives a running time of
$O\parens{\parens{\sum_{n=1}^N I_n}^{O(1/\varepsilon)}}$.
This in turn allows us to bound the overall
running time of~\cref{alg:total-size-Tucker}.

\begin{remark}
We can use integer linear programming solvers
for~\Cref{eq:tucker-packing-split-int-prog-obj} instead of~\citet{grandoni2014new},
but this is possible only because we decouple the core shape cost
and the factor matrix cost,
i.e., because of the $c_{\text{factor}}$ and $c_{\text{core}}$ budget splits.
\end{remark}

\begin{algorithm}[H]
   \caption{\texttt{TuckerPackingSolver}}
   \label{alg:total-size-Tucker}
   
   {\bfseries Input:}
   shape $(I_1,\dots,I_N) \in \mathbb{Z}_{\ge 1}^N$,
   $N$ non-increasing sequences $\mat{a}^{(n)}\in\R^{I_n}_{\geq 0}$, budget $c \ge 1 + \sum_{n=1}^N I_n$, error $\epsilon > 0$
   
\begin{algorithmic}[1]
    \STATE Initialize $S \gets \varnothing$
    \FOR{$c_{\text{factor}} \in [\ceil{1/\varepsilon} \sum_{n=1}^N I_n]$}
        \STATE Let $\mat{r}'$ be a $(1-\varepsilon)$-approximate solution to:
        \begin{align}
        \label{eq:tucker-packing-brute-force}
            \text{maximize~~} & \MaxObj(\mat{r}) \\ 
            \nonumber \text{subject to~~}
                & \textstyle \prod_{n=1}^N R_n \le c - c_{\text{factor}} \\
                \nonumber
                & \textstyle \sum_{n=1}^N I_n R_n \le c_{\text{factor}} \\
                \nonumber
                & R_n \in [\min(\ceil{1/\varepsilon}, I_n)] \hspace{1cm} \forall n\in [N]
        \end{align}
        \STATE Update $S \gets S \cup \{\mat{r}'\}$
    \ENDFOR

    \FOR{$k=0$ to $\floor{\log_{1+\epsilon} c}$}
        \STATE Set $c_{\text{core}} \gets (1+\epsilon)^k$   
        \STATE Let $\mat{r}^{(k)}$ be a $(1-\varepsilon)$-approximate solution to:
   \begin{align}
   \label{eq:one-large-dim-prob}
    \text{maximize~~} & \MaxObj(\mat{r}) \\ \nonumber
    \text{subject to~~} & \textstyle \prod_{n=1}^N R_n  \leq c_{\text{core}} \\ \nonumber
    & \textstyle \sum_{n=1}^N I_n R_n \leq c - c_{\text{core}} \\ \nonumber
    & R_n \in [I_n] \hspace{2.55cm} \forall n\in [N]
   \end{align}
   
   \STATE Update $S \gets S \cup \{\mat{r}^{(k)}\}$
   
   \ENDFOR
   \STATE \textbf{return} $\argmax_{\mat{r}\in S} \MaxObj(\mat{r})$
\end{algorithmic}
\end{algorithm}

\begin{theorem}
\label{thm:budget-splitting}
If $0 < \varepsilon < 1/3$,
then \Cref{alg:total-size-Tucker} returns a
$(1-3\epsilon)$-approximate solution
to Problem~\eqref{eq:comb_opt_whole_rep_prob} in time
$
    O\parens{\parens{\log_{1+\epsilon}(c) + \ceil*{1/\varepsilon}\sum_{n=1}^N I_n} \parens{\sum_{n=1}^N I_n}^{O\parens*{1/\varepsilon}}}.
$
\end{theorem}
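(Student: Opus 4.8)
The plan is to fix an optimal core shape $\mat{r}^*=(R_1^*,\dots,R_N^*)$ for the Tucker packing problem \eqref{eq:comb_opt_whole_rep_prob} and exhibit a \emph{single} iteration of \Cref{alg:total-size-Tucker} whose candidate $\mat{r}\in S$ satisfies $f(\mat{r}) \ge (1-3\varepsilon)f(\mat{r}^*)$; since the algorithm returns $\argmax_{\mat{r}\in S} f(\mat{r})$, this suffices. Throughout I would use that each subproblem \eqref{eq:tucker-packing-brute-force} and \eqref{eq:one-large-dim-prob}, after the prefix-sum and log transformations, is exactly an instance of \eqref{eq:tucker-packing-split-int-prog-obj}, i.e. a $2$-budgeted matroid independent set problem on $m=\sum_{n=1}^N I_n$ items, which \Cref{thm:grandoni2014new} solves to a $(1-\varepsilon)$-factor. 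I would then split the analysis according to whether $\max_n R_n^* \le \ceil{1/\varepsilon}$ (handled by the first loop) or $\max_n R_n^* > \ceil{1/\varepsilon}$ (handled by the second loop).

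\emph{Small case.} Suppose $R_n^* \le \ceil{1/\varepsilon}$ for all $n$, and set $c_{\text{factor}} = \sum_{n=1}^N I_n R_n^*$. Then $c_{\text{factor}} \le \ceil{1/\varepsilon}\sum_{n=1}^N I_n$, so this value is enumerated. For it, $\mat{r}^*$ meets the factor constraint with equality, meets $\prod_{n=1}^N R_n^* \le c - c_{\text{factor}}$ by feasibility of $\mat{r}^*$ in \eqref{eq:tucker-packing}--\eqref{eq:tucker-packing-constraint}, and obeys the range restriction $R_n^* \in [\min(\ceil{1/\varepsilon},I_n)]$. Hence $\mat{r}^*$ is feasible for \eqref{eq:tucker-packing-brute-force}, so the returned $\mat{r}'$ satisfies $f(\mat{r}') \ge (1-\varepsilon)f(\mat{r}^*) \ge (1-3\varepsilon)f(\mat{r}^*)$.

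\emph{Large case (the crux).} Relabel so that $R_1^* > \ceil{1/\varepsilon} \ge 1/\varepsilon$, and write $c_{\text{core}}^* = \prod_{n=1}^N R_n^*$. Choose the largest $k$ with $(1+\varepsilon)^k \le c_{\text{core}}^*$; then $c_{\text{core}} := (1+\varepsilon)^k$ satisfies $c_{\text{core}}^*/(1+\varepsilon) < c_{\text{core}} \le c_{\text{core}}^* \le c$, so $0 \le k \le \floor{\log_{1+\varepsilon} c}$ and this split is tried. I would perturb $\mat{r}^*$ into $\tilde{\mat{r}} = (\tilde{R}_1, R_2^*,\dots,R_N^*)$ with $\tilde{R}_1 = \floor{R_1^*/(1+\varepsilon)}$. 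Feasibility for \eqref{eq:one-large-dim-prob} follows since $\prod_{n=1}^N \tilde{R}_n = (\tilde{R}_1/R_1^*)\,c_{\text{core}}^* \le c_{\text{core}}^*/(1+\varepsilon) < c_{\text{core}}$ and $\sum_{n=1}^N I_n \tilde{R}_n \le \sum_{n=1}^N I_n R_n^* \le c - c_{\text{core}}^* \le c - c_{\text{core}}$. For the objective, the non-increasing property of $\mat{a}^{(1)}$ yields $\sum_{i=1}^{\tilde{R}_1} a_i^{(1)} \ge (\tilde{R}_1/R_1^*)\sum_{i=1}^{R_1^*} a_i^{(1)}$ (the shorter prefix has at least the full average), so $f(\tilde{\mat{r}}) \ge (\tilde{R}_1/R_1^*)f(\mat{r}^*)$. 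Since $R_1^* > 1/\varepsilon$, I would bound $\tilde{R}_1/R_1^* > 1/(1+\varepsilon) - 1/R_1^* > 1/(1+\varepsilon) - \varepsilon \ge 1 - 2\varepsilon$, whence the $(1-\varepsilon)$-approximate $\mat{r}^{(k)}$ obeys $f(\mat{r}^{(k)}) \ge (1-\varepsilon)(1-2\varepsilon)f(\mat{r}^*) \ge (1-3\varepsilon)f(\mat{r}^*)$.

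\emph{Running time.} The first loop runs $\ceil{1/\varepsilon}\sum_{n=1}^N I_n$ times and the second $\floor{\log_{1+\varepsilon} c}+1$ times; each iteration invokes \Cref{thm:grandoni2014new} with $d=2$ and $m=\sum_{n=1}^N I_n$, costing $O\parens{\parens{\sum_{n=1}^N I_n}^{O(1/\varepsilon)}}$, and multiplying yields the stated bound. I expect the main obstacle to be the large case: one must trade a controlled objective loss from shrinking a single large dimension against the geometric rounding of $c_{\text{core}}$ while preserving the factor budget, and then check that the two multiplicative losses $(1-\varepsilon)$ and $(1-2\varepsilon)$ compose to at least $1-3\varepsilon$. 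The hypothesis $\varepsilon < 1/3$ keeps this ratio positive and (together with $R_1^* > 1/\varepsilon$) guarantees $\tilde{R}_1 \ge 1$ so that $\tilde{\mat{r}}$ is a genuine core shape.
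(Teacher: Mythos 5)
Your proposal is correct and follows essentially the same strategy as the paper's proof: a small/large case split on $\max_n R_n^*$, exact enumeration of $c_{\text{factor}} = \sum_n I_n R_n^*$ in the small case, geometric rounding of $c^*_{\text{core}}$ with a single large dimension shrunk by $\floor{R_m^*/(1+\varepsilon)}$ in the large case, and the composition $(1-\varepsilon)(1-2\varepsilon) \ge 1-3\varepsilon$. Your prefix-average bound $\sum_{i=1}^{\tilde{R}_1} a_i^{(1)} \ge (\tilde{R}_1/R_1^*)\sum_{i=1}^{R_1^*} a_i^{(1)}$ and the estimate $1/(1+\varepsilon)-\varepsilon \ge 1-2\varepsilon$ are minor cosmetic variants of the paper's corresponding steps, so no substantive difference remains.
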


\begin{proof}
\texttt{TuckerPackingSolver} solves for two types of shapes:
(1) ``small'' solutions where each $R_n \le \ceil{1/\varepsilon}$,
and (2) ``large'' solutions where $R_n > \ceil{1/\varepsilon}$ for some $n \in [N]$.

In the small phase,
observe that since $R_n \le \min(\ceil{1/\varepsilon}, I_n)$,
the factor matrix cost is
$\sum_{n=1}^N I_n R_n \le \ceil{1/\varepsilon}\sum_{n=1}^N I_n$.
Therefore, we can exhaustively check all small
budget splits of the form
$c_{\text{factor}} \in [\ceil{1/\varepsilon}\sum_{n=1}^N I_n]$.
Each split induces a 2-dimensional knapsack problem
with a partition matroid (but for a smaller set of items),
so use \Cref{thm:grandoni2014new}
to obtain a $(1-\varepsilon)$-approximation
for each subproblem.
If an optimal solution $\mat{r}^*$ to the Tucker packing problem
is small, then \Cref{alg:total-size-Tucker} 
recovers an approximately optimal objective.

For the large phase, assume that the optimal core shape
$\mat{r}^*$ has a dimension $m \in [N]$ such that
$R_{m}^* > \ceil{1/\varepsilon}$.
The algorithm searches over large shapes indirectly by splitting the budget $c$
between the size of the core tensor and the total cost of factor matrices.
A crucial observation is that we only need to check
$O(\log_{1+\varepsilon} c)$ different splits because \emph{there is a
sufficient amount of slack in the large dimension $R_{m}^*$}.

To proceed, let
$
    c^*_{\text{core}} = \prod_{n=1}^N R_n^*,
$
and let $k^*$ be the largest integer such that
$(1+\varepsilon)^k \le c^*_{\text{core}}$.
Define
$\widehat{c}_{\text{core}} = (1+\varepsilon)^{k^*}$,
and let
$\widehat{\mat{r}} = (\widehat{R}_1, \dots, \widehat{R}_N)$
where
\[
    \widehat{R}_n
    =
    \begin{cases}
        R_{n}^* & \text{if $n \ne m$}, \\
        \floor{R_{n}^* / (1 + \varepsilon)} & \text{if $n = m$}.
    \end{cases}
\]
Since we assumed $R_{m}^* \ge \ceil{1/\varepsilon}$ is a large dimension,
$
    \widehat{R}_{m}
    =
    \floor{R_m^* / (1+\varepsilon)}
    \ge
    \floor{\ceil{1/\varepsilon}/(1+\varepsilon)}.
$
Next, observe that
\begin{align*}
    \prod_{n=1}^N \widehat{R}_n
    &\le
    \frac{1}{1+\varepsilon} \prod_{n=1}^N R_{n}^*
    =
    \frac{c_{\text{core}}}{1+\varepsilon} \\
    &<
    \frac{(1+\varepsilon)^{k^* + 1}}{1+\varepsilon}
    =
    (1+\varepsilon)^{k^*}
    =
    \widehat{c}_{\text{core}},
\end{align*}
and
$
    \sum_{n=1}^N I_n \widehat{R}_n
    \le
    \sum_{n=1}^N I_n R^*_n
    \le
    c - c_{\text{core}}^*
    \le
    c - \widehat{c}_{\text{core}}.
$
Therefore, $\widehat{\mat{r}}$ is a feasible solution
of~\Cref{eq:one-large-dim-prob} for $k=k^*$.
Furthermore,
$f(\mat{r}^{(k^*)}) \ge (1-\varepsilon) f(\widehat{\mat{r}})$.

Next, we show that $f(\mat{\widehat{r}}) \geq (1-2\varepsilon) f(\mat{r}^*)$.
For any $n \neq m$, we have $\widehat{R}_n = R_n^*$, so it follows that
\begin{equation}
\label{eqn:final_proof_eqn_00}
    \sum_{i_n=1}^{\widehat{R}_n} a_{i_n}^{(n)}
    =
    \sum_{i_n=1}^{R_n^*} a_{i_n}^{(n)}.
\end{equation}
For the large dimension $m$, we have
\begin{align*}
    \widehat{R}_m
    =
    \floor*{\frac{R_{m}^*}{1 + \varepsilon}}
    \ge
    \frac{R_{m}^*}{1 + \varepsilon} - 1
    =
    \frac{R_{m}^* - (1 + \varepsilon)}{1 + \varepsilon}
    \cdot
    \frac{R_{m}^*}{R_{m}^*}.
\end{align*}
The assumption $R_{m}^* \ge \ceil{1/\varepsilon}$ then gives
\begin{align*}
    \frac{R_m^* - (1 + \varepsilon)}{R_m^*}
    \ge
    1 - \frac{1 + \varepsilon}{\ceil{1/\varepsilon}}
    \ge
    1 - \varepsilon - \varepsilon^2.
\end{align*}
It follows for any $\varepsilon \ge 0$ that
\[
    \widehat{R}_{m}
    \ge
    \frac{1-\varepsilon - \varepsilon^2}{1+\varepsilon} R_{m}^*
    \ge
    (1-2\varepsilon) R_{m}^*.
\]
Since $a_{i_1}^{(m)} \geq \dots \geq a_{I_m}^{(m)}$ is non-increasing, we have
\begin{equation}
\label{eqn:final_proof_eqn_01}
    \sum_{i_m=1}^{\widehat{R}_m} a_{i_m}^{(m)}
    \geq
    (1-2\varepsilon) \sum_{i_m=1}^{\ropti{m}} a_{i_m}^{(m)}.
\end{equation}
Combining \Cref{eqn:final_proof_eqn_00} and \Cref{eqn:final_proof_eqn_01}
gives $f(\widehat{\mat{r}}) \ge (1-2\varepsilon) f(\mat{r}^*)$,
so then
$
    f(\mat{r}^{(k^*)})
    \geq
    (1-\varepsilon) (1-2\varepsilon) f(\mat{r}^*)
    \geq
    (1-3\varepsilon)f(\mat{r}^*),
$
which proves the approximation guarantee.\footnote{
To obtain a PTAS, it is enough to consider the budget splits
$c_{\text{factor}} \in [\sum_{n=1}^N I_n^2]$,
i.e., line~2 of \texttt{TuckerPackingSolver},
but this gives a worse running time as explained in~\cref{app:algorithm}.}

Finally, the running time follows from our reductions to
the 2-dimensional knapsack problem with a partition matroid constraint,
and using \Cref{thm:grandoni2014new} for each budget split.
\end{proof}

\section{Tree tensor network decompositions}
\label{sec:htucker}

Here we consider a general decomposition called tree tensor network~\cite{oseledets2009breaking,kramer2020tree},
which includes Tucker decomposition, tensor-train decomposition, and hierarchical Tucker decomposition as special cases.

\begin{definition}[Tree tensor network]
\label{def:tree-tensor}
Let $\tensor{X}\in\R^{I_1\times \cdots\times I_N}$ be any tensor.
Let $G=(V,E)$ be a rooted tree with~$N$ leaves
where each node $v$ corresponds to a subset $S_v \subseteq [N]$.
The leaves are the $N$ singletons of $[N]$,
and internal nodes~$v$ are recursively defined by
$S_v = \cup_{u \in C_v} S_u$,
where $C_v$ is the set of children of $v$.

Each edge $e \in E$ is endowed an integer $R_e \ge 1$.
Then, a (truncated) \emph{tree tensor network} decomposition
of $\tensor{X}$ for tree $G$ is the following collection of tensors,
each corresponding to a $v \in V$.
For each leaf, the tensor is $\mat{A}^{(v)} \in \R^{I_n \times R_e}$,
where $e$ is the edge connecting $v$ to its parent.
For each internal node $v$, its tensor is
$\tensor{T}^{(v)} \in \R^{R_{e_1} \times \dots \times R_{e_k}}$,
where $E_v = \{e_1, \dots, e_k\}$ is the set of edges incident to $v$.


The output tensor $\widehat{\tensor{X}}$ is constructed by taking 
the mode-wise products of all the ``node tensors'' in $G$ over their corresponding edges.
These products commute and are associative
(see, e.g., Proposition 2.17 in~\citet{kramer2020tree}).

\end{definition}

\begin{remark}
The tree tensor network for Tucker decomposition corresponds to a
tree of depth one,
and for hierarchical Tucker decomposition
it is an (almost) balanced binary tree~\citep{oseledets2009breaking, grasedyck2010hierarchical}.
\end{remark}

We give an example with figures in~\Cref{app:tree-tensor-networks}.
Now we generalize the definition of tensor unfolding.

\begin{definition}
For any $\tensor{X}\in\R^{I_1\times \cdots\times I_N}$ and $S \subseteq [N]$,
the \emph{matricization} $\mat{X}_{(S)} \in \R^{P \times Q}$,
where $P=\prod_{n\in S} I_n$ and $Q=\prod_{n\in[N]\setminus S} I_n$,
is the matrix with the entries of $\tensor{X}$ arranged lexicographically
by their original index tuples.
\end{definition}

The next theorem gives a polynomial-time algorithm
for finding a tree tensor network decomposition that
achieves bounded reconstruction error for specified $R_e$ values.

\begin{theorem}[\citet{grasedyck2010hierarchical,kramer2020tree}]
Let $\tensor{X}\in\R^{I_1\times \cdots\times I_N}$
and $G=(V,E)$, $R_e$ for $e \in E$
be the tree tensor network parameters as in~\Cref{def:tree-tensor}.
There exists a polynomial-time algorithm that finds $\tensor{T}^{(v)}$
for $v\in V$ (for leaves these tensors are the matrices $\mat{A}^{(v)}$)
such that
\begin{align*}
    \norm{\tensor{X} - \widehat{\tensor{X}}}_{\frobenius}^2
    &\leq
    \sum_{v\in V \setminus \{r\}} \sum_{i_v = R_v + 1}^{P_v} \parens*{\sigma_{i_v}^{(v)}}^2 \\
    &\leq
     (\abs{V} - 1)\cdot \norm{\tensor{X} - \tensor{X}_{\textnormal{best}}}_{\frobenius}^2,
\end{align*}
where $\tensor{X}_{\textnormal{best}}$ is the best
tree tensor network decomposition for $G$ and the values $R_e$,
$r$ is the root node, $P_v = \prod_{n\in S_v} I_n$,
and $\sigma_{i}^{(v)}$ is the $i$-th singular value of $\mat{X}_{(S_v)}$.

Further, the size of the tree tensor network decomposition is
$
    \sum_{v\in L} I_v R_v + \sum_{v\in J} \prod_{e\in E_v} R_e,
$
where $L$ is the set of leaves,
$J$ is the set of internal nodes,
and $R_v$ is value on the edge that connects $v$ to its parent.
\end{theorem}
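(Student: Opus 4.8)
The plan is to mirror the HOSVD truncation argument behind \Cref{thm:tensor_subspace_approximation}, but to replace the mode-$n$ unfoldings with the matricizations $\mat{X}_{(S_v)}$ indexed by the tree nodes. The underlying algorithm is the hierarchical SVD: for every non-root node $v$, compute the thin SVD of $\mat{X}_{(S_v)}$ and collect its top $R_v$ left singular vectors into $\mat{U}^{(v)} \in \R^{P_v \times R_v}$, so that the orthogonal projector $\Pi_v$ onto $\textnormal{range}(\mat{U}^{(v)})$ (acting on the $S_v$-matricization and the identity on the complementary modes) satisfies $\norm{\tensor{X} - \Pi_v \tensor{X}}_{\frobenius}^2 = \sum_{i_v = R_v+1}^{P_v}(\sigma_{i_v}^{(v)})^2$ by Eckart--Young. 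For a leaf $v$ this frame is exactly the factor matrix $\mat{A}^{(v)}$; for an internal node the transfer tensor $\tensor{T}^{(v)}$ is obtained by expressing $\mat{U}^{(v)}$ in the basis induced by its children's frames, and the root tensor is the full projection of $\tensor{X}$. Since each SVD is taken of a matrix of size at most $(\prod_n I_n) \times (\prod_n I_n)$, the entire procedure runs in time polynomial in the ambient dimension.

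For the first inequality I would set $\widehat{\tensor{X}} = \bigl(\prod_{v \in V \setminus \{r\}} \Pi_v\bigr)\tensor{X}$, processing the nodes from the leaves up to the root, and then prove the Pythagorean-type bound $\norm{\tensor{X} - \widehat{\tensor{X}}}_{\frobenius}^2 \le \sum_{v} \norm{(\mathrm{Id} - \Pi_v)\tensor{X}}_{\frobenius}^2$. The two ingredients are (i) that projectors attached to disjoint subtrees commute, since their matricizations act on disjoint groups of modes; and (ii) the nestedness property of hierarchical frames, namely that for an ancestor--descendant pair the range of the descendant projector is compatible with that of the ancestor, which guarantees that applying the descendant and sibling projectors first does not enlarge the residual measured by $\mathrm{Id} - \Pi_v$, i.e. $\norm{(\mathrm{Id} - \Pi_v)\,\prod_{w \text{ before } v} \Pi_w\,\tensor{X}}_{\frobenius} \le \norm{(\mathrm{Id} - \Pi_v)\tensor{X}}_{\frobenius}$. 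Summing these monotone per-node residuals and inserting the Eckart--Young identity above produces the tail-sum bound. I expect this monotonicity/nestedness step to be the main obstacle: unlike the Tucker case, where the mode-wise projectors trivially commute, here one must order the projections to respect the tree and argue that the per-node errors add up as squares rather than compounding.

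For the second inequality I would invoke the defining rank property of the format: any tree tensor network with edge ranks $R_e$ satisfies $\rank(\mat{X}_{\textnormal{best},(S_v)}) \le R_v$ for every non-root $v$. Since matricization is an isometry in the Frobenius norm, $\norm{\tensor{X} - \tensor{X}_{\textnormal{best}}}_{\frobenius}^2 = \norm{\mat{X}_{(S_v)} - \mat{X}_{\textnormal{best},(S_v)}}_{\frobenius}^2$, and Eckart--Young applied to the rank-$\le R_v$ matrix $\mat{X}_{\textnormal{best},(S_v)}$ gives $\norm{\tensor{X} - \tensor{X}_{\textnormal{best}}}_{\frobenius}^2 \ge \sum_{i_v = R_v + 1}^{P_v}(\sigma_{i_v}^{(v)})^2$. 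Hence every one of the $\abs{V} - 1$ tail sums is at most $\norm{\tensor{X} - \tensor{X}_{\textnormal{best}}}_{\frobenius}^2$, and summing over the non-root nodes yields the factor $\abs{V} - 1$.

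Finally, the size formula is a direct parameter count: each leaf $v \in L$ stores the matrix $\mat{A}^{(v)} \in \R^{I_v \times R_v}$, contributing $I_v R_v$ entries, while each internal node $v \in J$ stores $\tensor{T}^{(v)} \in \R^{R_{e_1} \times \dots \times R_{e_k}}$, contributing $\prod_{e \in E_v} R_e$ entries; adding these over $L$ and $J$ gives the stated total.
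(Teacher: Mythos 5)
First, a point of calibration: the paper itself does not prove this theorem anywhere --- it is imported wholesale from \citet{grasedyck2010hierarchical} and \citet{kramer2020tree} --- so your proposal has to be judged against the arguments in that cited literature rather than against an in-paper proof. Within that frame, your second and third steps are correct and match the standard arguments. The lower bound is exactly the tree analogue of the paper's \Cref{lemma:lower_bound_for_f}: any tensor in the format with edge ranks $R_e$ has $\rank\parens*{\mat{X}_{\textnormal{best},(S_v)}} \le R_v$ because the matricization at $S_v$ factors through the edge joining $v$ to its parent, Eckart--Young then gives $\norm{\tensor{X} - \tensor{X}_{\textnormal{best}}}_{\frobenius}^2 \ge \sum_{i_v > R_v} \parens*{\sigma_{i_v}^{(v)}}^2$ for each non-root $v$, and summing over the $\abs{V}-1$ such nodes gives the factor. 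The parameter count is a direct tally and is fine.

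The genuine gap is in step (ii) of your first part, exactly where you yourself flagged the obstacle. If the projectors $\Pi_v$ are fixed once and for all as the top-$R_v$ left singular subspaces of the matricizations of the \emph{original} tensor $\tensor{X}$, then for an ancestor--descendant pair ($S_w \subset S_v$) the two projectors act on overlapping groups of modes and do not commute, and the ``nestedness'' you invoke --- that the top-$R_v$ singular subspace of $\mat{X}_{(S_v)}$ sits inside the tensor product of its children's truncated subspaces --- is false in general (it holds for the untruncated column spaces, not for truncated ones); this also makes your recipe for extracting the transfer tensors $\tensor{T}^{(v)}$ ill-defined as stated. Moreover, the monotonicity inequality you assert, $\norm{(\mathrm{Id}-\Pi_v)\prod_{w}\Pi_w\tensor{X}}_{\frobenius} \le \norm{(\mathrm{Id}-\Pi_v)\tensor{X}}_{\frobenius}$, is not a generic fact about orthogonal projections: in $\R^2$ take $x = (1,1)/\sqrt{2}$, let $\Pi_v$ project onto $\mathrm{span}\{x\}$ and $\Pi_w$ onto $\mathrm{span}\{(1,0)\}$; then the right-hand side is $0$ while the left-hand side is $1/2$. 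So the entire weight of the first inequality rests on a property that fails for your fixed projectors, and this is precisely where the cited works do real work. One standard repair (the hierarchical analogue of the sequentially truncated HOSVD, processed leaves-to-root): at each node, project onto the top-$R_v$ singular subspace of the \emph{current, partially truncated} tensor. Then nestedness holds by construction, so the range of each projector is invariant under all later ones, the successive error terms are mutually orthogonal (giving an exact Pythagorean identity rather than an inequality to be argued), and each step's squared error --- the singular-value tail of the partially truncated matricization --- is bounded by the tail of $\mat{X}_{(S_v)}$ itself because previously applied orthogonal projections can only decrease singular values. That chain delivers the first inequality with the original tensor's $\sigma_{i}^{(v)}$'s; your sketch, as written, does not.
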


It follows that we can define the NP-hard
tree tensor network packing problem,
which generalizes Tucker packing.

\begin{definition}[Tree tensor network packing]
Given a shape $(I_1, \dots, I_N) \in \Z_{\ge 1}^N$,
tree $G=(V,E)$ with leaves $L=[N]$ and internal nodes $J$, 
$\abs{V}-1$ non-increasing sequences
(corresponding to non-root nodes)
$\mat{a}^{(v)} \in \R_{\ge 0}^{P_v}$ with $P_v=\prod_{n\in S_v} I_n$,
and the budget $c \ge 1$,
the \emph{tree tensor network packing} problem
asks to find $R_v \in \mathbb{Z}_{\ge 1}$ for $v\in V\setminus \{r\}$,
where $r$ is the root and $N_v$ is the set of neighbors of node $v$, that solves:
\vspace{-0.1cm}
\begin{align}
   &\text{maximize~} \sum_{v\in V\setminus \{r\}} \sum_{i=1}^{R_v} a_{i}^{(v)} \label{eq:tree-tucker-packing} \\
   &\text{subject to~} \hspace{0.31cm} \sum_{v\in J}\prod_{u\in N_v} R_u + \sum_{v\in L} I_v R_v \le c \label{eq:tree-tucker-packing-constraint}
\end{align}
\end{definition}

\begin{theorem}
\label{thm:tree-tensor-packing}
There is a $(1-\varepsilon$)-approximation algorithm for
the tree tensor network packing problem the runs in time
$
    O(\sum_{v\in V \setminus\{t\}} P_v +
    \varepsilon^{-(|V|-1)}\prod_{v\in V \setminus\{t\}} (1 + \log_{2}(P_v))).
$
\end{theorem}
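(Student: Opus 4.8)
The plan is to recognize the tree tensor network packing problem as a special case of the generalized downward-closed maximization \eqref{eq:comb_opt_whole_rep_prob} and then invoke \Cref{lem:grid-search} essentially verbatim. Here the decision variables are indexed by the non-root nodes $v \in V \setminus \{t\}$, where $t$ is the root, so there are $\abs{V}-1$ of them; each $R_v$ naturally ranges over $[P_v]$ because the sequence $\mat{a}^{(v)}$ has $P_v$ entries; and the objective \eqref{eq:tree-tucker-packing} is exactly the separable prefix-sum objective $\sum_v \sum_{i=1}^{R_v} a_i^{(v)}$ of \eqref{eq:comb_opt_whole_rep_prob} after relabeling $n \mapsto v$ and $N \mapsto \abs{V}-1$. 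The sequences $\mat{a}^{(v)}$ are non-increasing by definition, so the only hypothesis of \Cref{lem:grid-search} left to verify is that the feasible region carved out by the budget constraint \eqref{eq:tree-tucker-packing-constraint} is downward closed.

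To establish downward closedness, I would show that the left-hand side of \eqref{eq:tree-tucker-packing-constraint} is monotone non-decreasing in every coordinate $R_v$. That expression is a sum of two kinds of terms: products of edge ranks, one product per internal node over its incident edges, and linear leaf terms $I_v R_v$. Each rank variable is a positive integer and enters any product it appears in with exponent one, while every other factor is at least $1$; hence raising a single $R_v$ while fixing the rest can only increase each product containing it, leaves the remaining products unchanged, and only increases any linear term. Therefore the whole left-hand side is monotone non-decreasing, so lowering any coordinate preserves feasibility — precisely the downward-closed property.

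With downward closedness in hand, \Cref{lem:grid-search} applies directly: restricting each coordinate $R_v$ to the geometric grid $S_v^{(\varepsilon)}$ of \Cref{lem:grid-search} (with $I_n$ replaced by $P_v$) and returning the best feasible grid point yields a $(1+\varepsilon)^{-1}$-approximation to \eqref{eq:tree-tucker-packing}. Since $(1+\varepsilon)^{-1} \ge 1-\varepsilon$ for all $\varepsilon \ge 0$, this is in particular a $(1-\varepsilon)$-approximation, as claimed. The running-time bound follows by substituting $N \mapsto \abs{V}-1$ and $I_n \mapsto P_v$ into the complexity stated in \Cref{lem:grid-search}: precomputing the prefix sums of all the $\mat{a}^{(v)}$ costs $O(\sum_{v \in V \setminus \{t\}} P_v)$, and the product grid has $\prod_{v \in V \setminus \{t\}} \abs{S_v^{(\varepsilon)}} = O(\varepsilon^{-(\abs{V}-1)} \prod_{v \in V \setminus \{t\}} (1 + \log_2 P_v))$ points to enumerate.

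There is no deep obstacle here; the proof is a reduction to \Cref{lem:grid-search}. The single step that deserves care is the downward-closedness verification, because unlike the Tucker constraint \eqref{eq:tucker-packing-constraint} — a single product plus one linear sum — the tree constraint is a sum of many products, one per internal node, and a given edge rank can appear in more than one of them. The key structural observation is simply that it appears in each such product as a single positive factor, so monotonicity (and hence downward closedness) still holds, after which the approximation ratio and running time are immediate from the Tucker-case machinery.
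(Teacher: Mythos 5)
Your proposal is correct and follows exactly the paper's own route: the paper proves \Cref{thm:tree-tensor-packing} simply by remarking that it is implied by \Cref{lem:grid-search}, which is precisely your reduction. Your write-up additionally spells out the details the paper leaves implicit (the relabeling $N \mapsto \abs{V}-1$, $I_n \mapsto P_v$, the downward-closedness of the feasible set via monotonicity of the constraint's left-hand side, and the bound $(1+\varepsilon)^{-1} \ge 1-\varepsilon$), all of which are accurate.
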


\section{Experiments}
\label{sec:experiments}

\begin{figure*}
\centering
    \includegraphics[width=1.0\linewidth]{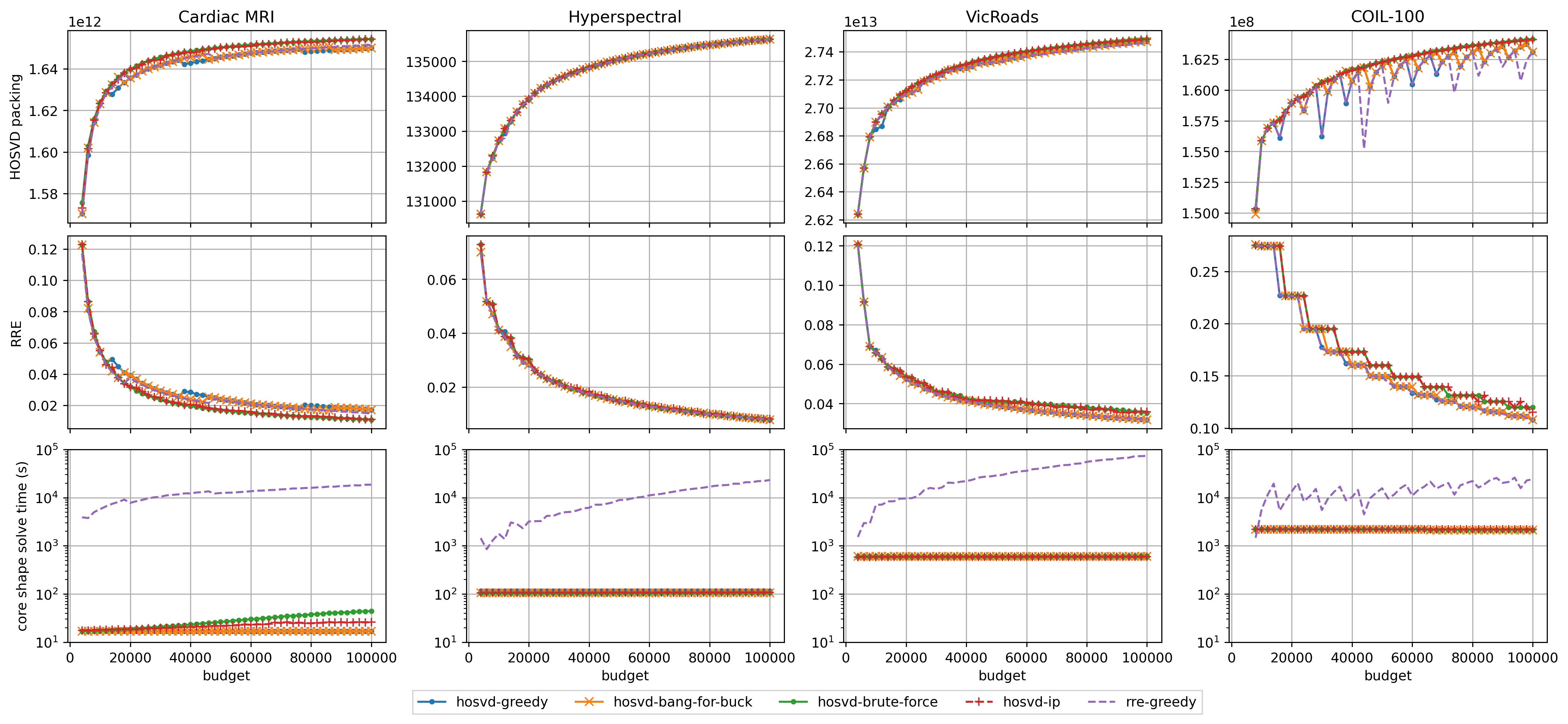}
\vspace{-0.75cm}
\caption{Comparison of five core shape solvers on four real-world tensors (columns) for increasing values of the Tucker decomposition size budget $c \le 100,000$.
The plots in the top row are the HOSVD Tucker packing objective value $f(\mat{r})$ for the core shape solutions $\mat{r}$,
the middle row is the RRE,
and the bottom row is the running time of each algorithm in seconds.}
\label{fig:solver-comparison}
\vspace{-0.3cm}
\end{figure*}

We compare several algorithms for computing
core shapes for size-constrained Tucker decompositions
on four real-world tensors (see \Cref{table:tensor-datasets}).
These experiments demonstrate the effectiveness of using
the surrogate loss $\widetilde{L}(\tensor{X},\mat{r})$ in place of
the true relative reconstruction error (RRE),
both in terms of solution quality and running time.
All of the experiments use NumPy~\citep{harris2020array}
with an Intel Xeon W-1235 processor (3.7 GHz, 8.25MB cache) and 128GB of RAM.

\begin{table}[h!]
\vspace{-0.25cm}
\caption{Statistics for tensor datasets used in experiments.}
\label{table:tensor-datasets}
\vspace{-0.25cm}
\begin{center}
\begin{small}
\begin{sc}
\begin{tabular}{lcr}
\toprule
tensor & shape & size \\
\midrule
cardiac mri & $256 \times 256 \times 14 \times 20$ & $18,350,080$ \\
hyperspectral & $1024 \times 1344 \times 33$ & $45,416,488$ \\
vicroads & $1084 \times 2033 \times 96$ & $211,562,112$ \\
coil-100 & $7200 \times 128 \times 128 \times 3$ & $353,894,400$ \\
\bottomrule
\end{tabular}
\end{sc}
\end{small}
\end{center}
\vspace{-0.35cm}
\end{table}

\subsection{Algorithms}

The first four algorithms we consider are based on HOSVD Tucker packing:
they compute the mode-$n$ singular values $\sigma_{i_n}^{(n)}$
and take $\mat{a}^{(n)} = \{(\sigma_{i_n}^{(n)})^2\}_{i_n=1}^{I_n}$
as input to their Tucker packing instance.
The fifth is a commonly used greedy algorithm
that computes true losses
$L(\tensor{X}, \mat{r})$ at each step.

\textbf{HOSVD-IP} is the \texttt{TuckerPackingSolver} algorithm with $\varepsilon=0.25$,
but we use the integer programming solver in \texttt{scipy.optimize.mlip} to solve
each budget split subproblem instead of the PTAS in~\citet{grandoni2014new}.

\textbf{HOSVD-greedy} maximizes the same packing objective
by
repeating
$\mat{r} \gets \argmax_{\mat{r}' \in N(\mat{r})} f(\mat{r}')$,
where $N(\mat{r})$ is the set of neighboring feasible core shapes
$\mat{r'} = \mat{r} + \mat{e}_{n}$
and $\mat{e}_n \in \{0,1\}^{N}$ is a standard unit vector.
This is Algorithm 3.1 in~\citet{ehrlacher2021adaptive}
with additional budget constraints.

\textbf{HOSVD-bang-for-buck} is analogous to HOSVD-greedy, but
it increments the dimension in each step that maximizes
$(f(\mat{r}') - f(\mat{r})) / (\text{cost}(\mat{r}') - \text{cost}(\mat{r}))$, where $\text{cost}(\mat{r})$
is the size of the rank-$\mat{r}$ Tucker decomposition.

\textbf{HOSVD-brute-force} exhaustively checks all feasible core shapes
and outputs the maximum Tucker packing objective.

\textbf{RRE-greedy} constructs the core shape by
computing $O(N)$ rank-$\mat{r}'$ Tucker decompositions in each step
and incrementing the dimension that most improves the RRE,
Concretely, the update is
$\mat{r} \gets \argmin_{\mat{r}' \in N(\mat{r})} L(\tensor{X}, \mat{r'})$,~similar to the Greedy-TL algorithm of~\citet{hashemizadeh2020adaptive}.

\subsection{Results}

We consider the budgets $c \le 100,000$ for all tensor datasets.
For each $c$, we run each algorithm to get core shape $\mat{r}$.
Then in \Cref{fig:solver-comparison},
we plot the packing objective $f(\mat{r})$,
the RRE, i.e., $L(\tensor{X},\mat{r}) / \norm{\tensor{X}}_{\frobenius}^2$,
and the algorithm running time
(including the mode-$n$ singular value computations) as a function of~$c$.
Each $L(\tensor{X}, \mat{r})$ computation uses
20 iterations of HOOI.

\textbf{Cardiac MRI}
shows that maximizing the HOSVD Tucker packing objective $f(\mat{r})$
can give noticeably better RRE than RRE-greedy,
while also running 1000x faster.
If we take a closer look at the core shapes these algorithms output,
HOSVD-\{brute-force, IP\} always return core shapes of the form $(x,y,z,1)$,
whereas the greedy algorithms allocate budget
to the time dimension as $c$ increases, e.g., $(x,y,z,3)$.
Increases in the fourth dimension correspond to points of
degradation in $f(\mat{r})$ and RRE in \Cref{fig:solver-comparison}.

This tensor is also small enough
to see differences in the running times of the HOSVD packing algorithms.
In particular, we see (1) that there is a fixed
cost for computing the $\sigma_{i_n}^{(n)}$'s,
and (2) that HOSVD-\{greedy, bang-for-buck\} are faster than
HOSVD-IP, which is faster than HOSVD-brute-force.
All algorithms are significantly faster than RRE-greedy.

\textbf{Hyperspectral} 
shows that the surrogate loss $\widetilde{L}(\tensor{X}, \mat{r})$
and RRE can guide greedy algorithms to the same core shapes,
and that HOSVD-greedy can achieve maximum the Tucker packing objective.
We see that computing the higher-order singular
values becomes the bottleneck for the HOSVD solvers,
not solving the packing instances themselves.

\textbf{VicRoads} 
shows a clear gap between RRE and the surrogate loss.
While HOSVD-\{greedy, bang-for-buck\} are suboptimal
in the packing objective,
they achieve the same RRE as the 100x slower RRE-greedy algorithm.
This data demonstrates a shortcoming of the surrogate loss,
but also shows that higher-order singular values can still be effective.

\textbf{COIL-100}
is perhaps the most interesting tensor
because it shows
the \emph{non-monotonic} behavior of greedy
HOSVD Tucker packing algorithms.
Similar to cardiac MRI,
every time a greedy core shape solver increases the
dimension of the first index (corresponding to the number of objects),
the packing objective $f(\mat{r})$.
This effect also appears in the RRE plots,
but it happens in the opposite direction.

\section*{Acknowledgements}
We thank the anonymous reviewers of an earlier version of this
work for pointing us to the multiple-choice knapsack literature.
We also want to thank Mohit Singh and Anupam Gupta for
fruitful discussions about our main algorithm.

\bibliography{references}
\bibliographystyle{icml2022}

\newpage
\appendix
\onecolumn

\section{Missing analysis from \Cref{sec:hosvd_packing}}
\label{app:hosvd_packing}

\subsection{Proof of \Cref{thm:tensor_subspace_approximation}}
\label{subsec:reconstruction_error_proofs}

\subsubsection{Upper bounding the reconstruction error}

For any Tucker decomposition
$\tensor{X} = \tensor{G} \times_1 \mat{A}^{(1)} \times_2 \cdots \times_N \mat{A}^{(N)} \in \R^{I_1 \times \dots \times I_N}$,
the mode-$n$ unfolding of~$\tensor{X}$ can
be written as
\begin{equation}
\label{eq:mode_n_mat}
    \mat{X}_{(n)}
    =
    \mat{A}^{(n)} \mat{G}_{(n)}
    \parens*{
      \mat{A}^{(1)} \ktimes \cdots \mat{A}^{(n-1)} \ktimes \mat{A}^{(n+1)} \ktimes \cdots \ktimes \mat{A}^{(N)}
    }^\intercal.
\end{equation}
This is the corrected version of Equation~(4.2) in \citet{kolda2009tensor}.

\begin{lemma}[{\citet[Property 10]{HOSVD}}]
\label{lemma:upper_bound_for_loss}
For any
$\tensor{X}\in\mathbb{R}^{I_1\times\cdots\times I_N}$
and $n \in [N]$,
let
$\sigma^{(n)}_{1} \geq \sigma^{(n)}_{2} \dots \geq \sigma^{(n)}_{I_n}$ denote the singular values of $\mat{X}_{(n)}$.
Then, for any core shape
$\mat{r} = (R_1, R_2, \dots, R_N) \in [I_1] \times [I_2] \times \dots \times [I_N]$, we have
\begin{align*}
    \Loss(\tensor{X}, \mat{r})
    &\le
    \norm{\tensor{X} - \widehat{\tensor{X}}_{\textnormal{HOSVD}(\mat{r})}}_{\frobenius}^2 \\
    &\le
    \sum_{i_1 = R_1 + 1}^{I_1} \parens*{\sigma_{i_1}^{(1)}}^2
    +
    \sum_{i_2 = R_2 + 1}^{I_2} \parens*{\sigma_{i_2}^{(2)}}^2
    +
    \cdots
    +
    \sum_{i_N = R_N + 1}^{I_N} \parens*{\sigma_{i_N}^{(N)}}^2.
\end{align*}
\end{lemma}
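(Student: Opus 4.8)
The plan is to treat the two inequalities separately. The left inequality $\Loss(\tensor{X},\mat{r}) \le \norm{\tensor{X} - \widehat{\tensor{X}}_{\textnormal{HOSVD}(\mat{r})}}_{\frobenius}^2$ is immediate from the definition of $\Loss$: the HOSVD factor matrices $\mat{A}^{(1)}, \dots, \mat{A}^{(N)}$ together with the core $\tensor{G}$ constitute one feasible choice in the minimization over core tensors that defines $\Loss(\tensor{X},\mat{r})$, so the optimal loss can only be smaller. Essentially all of the work goes into the right inequality.

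For the right inequality, I would first rewrite the reconstruction as a composition of orthogonal projections. Writing $\mat{P}^{(n)} = \mat{A}^{(n)}(\mat{A}^{(n)})^\intercal$ for the orthogonal projection onto the top-$R_n$ left singular subspace of $\mat{X}_{(n)}$ (so $\mat{P}^{(n)}$ is idempotent since the columns of $\mat{A}^{(n)}$ are orthonormal), we have $\tensor{G} = \tensor{X} \times_1 (\mat{A}^{(1)})^\intercal \times_2 \cdots \times_N (\mat{A}^{(N)})^\intercal$, and hence $\widehat{\tensor{X}}_{\textnormal{HOSVD}(\mat{r})} = \tensor{X} \times_1 \mat{P}^{(1)} \times_2 \cdots \times_N \mat{P}^{(N)}$. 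I would then peel off the projections one mode at a time: setting $\tensor{Y}_0 = \tensor{X}$ and $\tensor{Y}_k = \tensor{X} \times_1 \mat{P}^{(1)} \times_2 \cdots \times_k \mat{P}^{(k)}$, the error telescopes as $\tensor{X} - \widehat{\tensor{X}}_{\textnormal{HOSVD}(\mat{r})} = \sum_{k=1}^N (\tensor{Y}_{k-1} - \tensor{Y}_k)$, where each increment equals $\tensor{Y}_{k-1} \times_k \mat{Q}^{(k)}$ with $\mat{Q}^{(k)} = \mat{I} - \mat{P}^{(k)}$.

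The crux, and the step I expect to be most delicate, is showing that these $N$ increments are pairwise orthogonal in the tensor inner product, so that the squared error splits cleanly as a sum. For $j > k$, the mode-$k$ unfolding of $\tensor{Y}_{k-1} - \tensor{Y}_k$ is a left multiple of $\mat{Q}^{(k)}$, whereas the mode-$k$ unfolding of $\tensor{Y}_j - \tensor{Y}_{j+1}$ is a left multiple of $\mat{P}^{(k)}$ (since $\tensor{Y}_j$ already carries the mode-$k$ projection whenever $j \ge k$); their inner product then equals $\tr$ of a matrix product containing the factor $\mat{Q}^{(k)}\mat{P}^{(k)} = \mat{0}$, so it vanishes. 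Granting this, the Pythagorean identity gives $\norm{\tensor{X} - \widehat{\tensor{X}}_{\textnormal{HOSVD}(\mat{r})}}_{\frobenius}^2 = \sum_{k=1}^N \norm{\tensor{Y}_{k-1} \times_k \mat{Q}^{(k)}}_{\frobenius}^2$. Finally, I would bound each summand using that the orthogonal projections in modes $1,\dots,k-1$ built into $\tensor{Y}_{k-1}$ are non-expansive, so $\norm{\tensor{Y}_{k-1} \times_k \mat{Q}^{(k)}}_{\frobenius}^2 \le \norm{\tensor{X} \times_k \mat{Q}^{(k)}}_{\frobenius}^2 = \norm{\mat{Q}^{(k)} \mat{X}_{(k)}}_{\frobenius}^2$, and this last quantity is exactly the tail sum $\sum_{i_k = R_k+1}^{I_k}(\sigma_{i_k}^{(k)})^2$, since by \Cref{thm:hosvd} the $\sigma_{i_k}^{(k)}$ are the singular values of $\mat{X}_{(k)}$ and $\mat{Q}^{(k)}$ discards precisely the top $R_k$ left singular directions. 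Summing over $k \in [N]$ yields the claimed upper bound.
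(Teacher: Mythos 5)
Your proof is correct, and it takes a genuinely different route from the paper's. You argue in the ambient coordinates: write $\widehat{\tensor{X}}_{\textnormal{HOSVD}(\mat{r})} = \tensor{X} \times_1 \mat{P}^{(1)} \times_2 \cdots \times_N \mat{P}^{(N)}$ with $\mat{P}^{(n)} = \mat{A}^{(n)}(\mat{A}^{(n)})^\intercal$, telescope the error into $N$ increments, verify the increments are pairwise orthogonal (your trace argument is sound: for $j > k$ the mode-$k$ unfoldings carry left factors $\mat{Q}^{(k)}$ and $\mat{P}^{(k)}$ respectively, and $\mat{P}^{(k)}\mat{Q}^{(k)} = \mat{0}$ by idempotence and symmetry), then bound each Pythagorean term by non-expansiveness of projections and identify $\norm{\mat{Q}^{(k)}\mat{X}_{(k)}}_{\frobenius}^2$ with the tail sum via Eckart--Young-style reasoning. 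The paper instead works entirely in core coordinates: it takes the full HOSVD $\tensor{X} = \tensor{S} \times_1 \mat{U}^{(1)} \times_2 \cdots \times_N \mat{U}^{(N)}$ of \Cref{thm:hosvd}, truncates the core to $\overline{\tensor{S}}$, bounds the discarded mass by subadditivity over slabs (every zeroed entry lies in at least one discarded slab $\tensor{S}_{i_n = j}$ with $j > R_n$, giving $\norm{\tensor{S}-\overline{\tensor{S}}}_{\frobenius}^2 \le \sum_n \sum_{j > R_n} \norm{\tensor{S}_{i_n=j}}_{\frobenius}^2$), invokes the identity $\norm{\tensor{S}_{i_n=j}}_{\frobenius} = \sigma_j^{(n)}$ from \Cref{thm:hosvd}, and transfers back to $\tensor{X}$ by orthogonal invariance of the Frobenius norm. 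The trade-off: the paper's argument gets orthogonality ``for free'' from the all-orthogonality property of the HOSVD core and needs no cross-term computation, but it leans on the full structure of \Cref{thm:hosvd}; your argument only needs that each $\mat{A}^{(n)}$ has orthonormal columns spanning a top left singular subspace, the Pythagorean splitting holds for arbitrary orthonormal factors (the singular values enter only in the last step), and the same telescoping extends directly to sequentially truncated variants such as ST-HOSVD. Both proofs handle the first inequality identically, by feasibility of the HOSVD output in the minimization defining $\Loss(\tensor{X},\mat{r})$.
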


\begin{proof}
Let the HOSVD of $\tensor{X} = \tensor{S} \times_1 \mat{U}^{(1)} \times_2 \cdots \times_N \mat{U}^{(N)}$ as in
\Cref{thm:hosvd}.
Let $\overline{\tensor{S}}$ denote the truncated version of $\tensor{S}$
with respect to core shape $\mat{r}$ such that
\[
    \overline{s}_{i_1 \dots i_N} =
    \begin{cases}
      s_{i_1 \dots i_N} & $\text{if $i_n \le R_n$ for all $n \in [N]$},$\\
      0 & \text{otherwise}.
    \end{cases}
\]
Let $\overline{\tensor{X}} = \overline{\tensor{S}} \times_1 \mat{U}^{(1)} \times_2 \dots \times_N \mat{U}^{(N)}$.
Summing over all dimensions $n \in [N]$ and using the HOSVD results in \cref{thm:hosvd},
\begin{align*}
    \norm{\tensor{S} - \overline{\tensor{S}}}_{\frobenius}^2
    &\leq 
    \sum_{n=1}^N
    \sum_{j = R_n + 1}^{I_n} \norm{\parens*{\tensor{S}-\overline{\tensor{S}}}_{i_n=j}}_{\frobenius}^2 \\
    &=
    \sum_{n=1}^N
    \sum_{j = R_n + 1}^{I_n} \norm{\tensor{S}_{i_n=j}}_{\frobenius}^2 \\
    &=
    \sum_{n=1}^N
    \sum_{i_n = R_n + 1}^{I_n} \parens*{\sigma_{i_n}^{(n)}}^2.
\end{align*}
We have
$\norm{\tensor{S} - \overline{\tensor{S}}}_{\frobenius}^2 = \norm{\mat{S}_{(n)} - \overline{\mat{S}}_{(n)}}_{\frobenius}^2$ for all values of $n$.
Further, since the Kronecker product of two orthogonal matrices is also orthogonal
and multiplication by an orthogonal matrix does not affect the Frobenius norm,
Equation~\Cref{eq:mode_n_mat} implies that
\begin{align*}
    \norm{\mat{S}_{(n)} - \overline{\mat{S}}_{(n)}}_{\frobenius}^2
    &=
    \norm{\mat{U}_{(n)} (\mat{S}_{(n)} - \overline{\mat{S}}_{(n)})(\mat{U}_{(1)} \ktimes \dots \ktimes \mat{U}_{(n-1)} \ktimes \mat{U}_{(n+1)} \ktimes \cdots \ktimes \mat{U}_{(N)})^\intercal}_{\frobenius}^2
\\ & =
\norm{\tensor{X} - \overline{\tensor{X}}}_{\frobenius}^2.
\end{align*}
Observing that $\overline{\tensor{X}} = \widehat{\tensor{X}}_{\textnormal{HOSVD}(\mat{r})}$
by the definition of \texttt{TuckerHOSVD} in \Cref{alg:tucker_hosvd}
and putting everything together,
\[
    L(\tensor{X}, \mat{r})
    \leq
    \norm{\tensor{X} - \widehat{\tensor{X}}_{\textnormal{HOSVD}(\mat{r})}}_{\frobenius}^2
    =
    \norm{\tensor{X} - \overline{\tensor{X}}}_{\frobenius}^2
    =
    \norm{\tensor{S} - \overline{\tensor{S}}}_{\frobenius}^2
    \leq
    \sum_{n=1}^N \sum_{i_n = R_n + 1}^{I_n} \parens*{\sigma_{i_n}^{(n)}}^2,
\]
which completes the proof.
\end{proof}

\subsubsection{Lower bounding the reconstruction error}

\begin{theorem}[\citet{eckart1936approximation}]
\label{thm:low_rank}
Let $\mat{A} \in \mathbb{R}^{n\times d}$ with $n \ge d$ and singular values
$\sigma_1 \geq \sigma_2 \geq \dots \geq \sigma_d \geq 0$.
Let $\mat{A}_k$ be the best rank-$k$ approximation of $\mat{A}$ in the Frobenius norm.
Then
\[
    \norm*{\mat{A} - \mat{A}_k}_{\frobenius}^2 = \sum_{i = k+1}^d \sigma_i^2.
\]
\end{theorem}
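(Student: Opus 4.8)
The plan is to establish matching upper and lower bounds for $\min_{\rank(\mat{B}) \le k} \norm{\mat{A} - \mat{B}}_{\frobenius}^2$, each equal to $\sum_{i=k+1}^d \sigma_i^2$, and to note that the truncated SVD attains the common value. For the upper bound I would exhibit an explicit rank-$k$ competitor. Writing the SVD as $\mat{A} = \sum_{i=1}^d \sigma_i \mat{u}_i \mat{v}_i^\intercal$ with $\{\mat{u}_i\}$ and $\{\mat{v}_i\}$ orthonormal, set $\mat{A}_k = \sum_{i=1}^k \sigma_i \mat{u}_i \mat{v}_i^\intercal$. Since the rank-one terms are orthonormal in the Frobenius inner product, because $\inner{\mat{u}_i \mat{v}_i^\intercal, \mat{u}_j \mat{v}_j^\intercal} = (\mat{u}_i^\intercal \mat{u}_j)(\mat{v}_i^\intercal \mat{v}_j) = \delta_{ij}$, the residual satisfies $\norm{\mat{A} - \mat{A}_k}_{\frobenius}^2 = \norm{\sum_{i=k+1}^d \sigma_i \mat{u}_i \mat{v}_i^\intercal}_{\frobenius}^2 = \sum_{i=k+1}^d \sigma_i^2$, so the optimal value is at most this quantity.

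For the lower bound, which is the main obstacle, I would reduce from arbitrary rank-$k$ matrices to orthogonal projections. Any $\mat{B}$ with $\rank(\mat{B}) \le k$ has a column space $\cS$ of dimension at most $k$, and among all matrices whose column space lies in $\cS$, the Frobenius-closest to $\mat{A}$ is the columnwise orthogonal projection $\mat{P}_{\cS}\mat{A}$ (the minimization over columns decouples into independent projections). Hence it suffices to lower bound $\norm{\mat{A} - \mat{P}_{\cS}\mat{A}}_{\frobenius}^2$ over all subspaces $\cS$ with $\dim \cS = k$ (taking $\dim\cS=k$ is without loss of generality, as enlarging $\cS$ only decreases the residual). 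By the Pythagorean identity this equals $\norm{\mat{A}}_{\frobenius}^2 - \norm{\mat{P}_{\cS}\mat{A}}_{\frobenius}^2$, and since $\norm{\mat{A}}_{\frobenius}^2 = \sum_{i=1}^d \sigma_i^2$, the claim reduces to showing $\max_{\dim \cS = k} \norm{\mat{P}_{\cS}\mat{A}}_{\frobenius}^2 = \sum_{i=1}^k \sigma_i^2$.

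This maximization is the crux. Using $\mat{P}_{\cS}^2 = \mat{P}_{\cS} = \mat{P}_{\cS}^\intercal$ and cyclicity of the trace, $\norm{\mat{P}_{\cS}\mat{A}}_{\frobenius}^2 = \tr(\mat{P}_{\cS} \mat{A}\mat{A}^\intercal)$, and writing $\mat{P}_{\cS} = \sum_{j=1}^k \mat{q}_j \mat{q}_j^\intercal$ for an orthonormal basis $\mat{q}_1,\dots,\mat{q}_k$ of $\cS$ turns this into $\sum_{j=1}^k \mat{q}_j^\intercal (\mat{A}\mat{A}^\intercal) \mat{q}_j$, a sum of Rayleigh quotients of the positive semidefinite matrix $\mat{A}\mat{A}^\intercal$ whose eigenvalues are $\sigma_1^2 \ge \dots \ge \sigma_d^2$ with eigenvectors $\mat{u}_1,\dots,\mat{u}_d$. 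I would then invoke the Ky Fan maximum principle: the sum of $k$ orthonormal Rayleigh quotients is maximized, with value $\sum_{i=1}^k \sigma_i^2$, by choosing $\cS = \mathrm{span}(\mat{u}_1,\dots,\mat{u}_k)$. This can be proved by a short exchange argument or by induction via Courant--Fischer, and it is exactly the step that isolates the hardness of the lower bound.

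Combining the two bounds shows the optimum equals $\sum_{i=k+1}^d \sigma_i^2$ and is attained by $\mat{A}_k$. As an alternative route for the lower bound that avoids projections, I would use Weyl's inequality: for $\rank(\mat{B}) \le k$ one has $\sigma_{i+k}(\mat{A}) \le \sigma_i(\mat{A} - \mat{B})$ because $\sigma_{k+1}(\mat{B}) = 0$. Summing squares gives $\norm{\mat{A}-\mat{B}}_{\frobenius}^2 = \sum_{i} \sigma_i(\mat{A}-\mat{B})^2 \ge \sum_{i=1}^{d-k} \sigma_{i+k}(\mat{A})^2 = \sum_{j=k+1}^d \sigma_j(\mat{A})^2$, recovering the same bound directly.
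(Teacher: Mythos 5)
You have given a correct proof, but note that the paper itself never proves this statement: it invokes the Eckart--Young--Mirsky theorem purely as a classical citation to \citet{eckart1936approximation} and uses it as a black box inside the proof of \Cref{lemma:lower_bound_for_f}, so your proposal supplies a proof where the paper has none rather than taking a different route from one. On its own terms your argument is sound: the upper bound via the truncated SVD is right, including the Frobenius-orthonormality computation $\inner{\mat{u}_i\mat{v}_i^\intercal, \mat{u}_j\mat{v}_j^\intercal} = (\mat{u}_i^\intercal\mat{u}_j)(\mat{v}_i^\intercal\mat{v}_j) = \delta_{ij}$; the lower bound correctly reduces an arbitrary rank-$k$ competitor $\mat{B}$ to the columnwise orthogonal projection $\mat{P}_{\cS}\mat{A}$ onto its column space, applies the Pythagorean identity, and arrives at the Rayleigh-quotient formulation $\tr(\mat{P}_{\cS}\mat{A}\mat{A}^\intercal) = \sum_{j=1}^k \mat{q}_j^\intercal \mat{A}\mat{A}^\intercal \mat{q}_j$, whose maximization over $k$-dimensional subspaces is exactly the Ky Fan maximum principle. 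The one caveat is that Ky Fan's principle is of comparable depth to the theorem being proved, so as written the hard content is delegated rather than established; your alternative Weyl-inequality route, $\sigma_{i+k}(\mat{A}) \le \sigma_i(\mat{A}-\mat{B}) + \sigma_{k+1}(\mat{B}) = \sigma_i(\mat{A}-\mat{B})$ for $\rank(\mat{B}) \le k$, followed by summing squares, is the cleaner way to make the proof genuinely self-contained, and you state it correctly. Either completion would be a valid substitute for the paper's citation.
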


\begin{restatable}{lemma}{LowerBoundForF}
\label{lemma:lower_bound_for_f}
For any
$\tensor{X}\in\mathbb{R}^{I_1\times\cdots\times I_N}$
and $n \in [N]$,
let
$\sigma^{(n)}_{1} \ge \dots \ge \sigma^{(n)}_{I_n}$ denote the singular values of $\mat{X}_{(n)}$.
For every core shape $\mat{r} = (R_1, R_2, \dots, R_N) \in [I_1] \times [I_2] \times \dots \times [I_N]$,
we have
\begin{align*}
    \Loss(\tensor{X}, \mat{r})
    \geq
    \max_{n\in [N]} \sum_{i_n = R_n + 1}^{I_n} \parens*{\sigma_{i_n}^{(n)}}^2.
\end{align*}
\end{restatable}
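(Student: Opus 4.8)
The plan is to reduce the Tucker reconstruction error to a sequence of matrix low-rank approximation problems, one per mode, and then invoke Eckart--Young (\Cref{thm:low_rank}). The guiding intuition is that any factor matrix $\mat{A}^{(n)} \in \R^{I_n \times R_n}$ forces the mode-$n$ unfolding of the reconstruction to have rank at most $R_n$, and the Frobenius error is invariant under unfolding.

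First I would fix an arbitrary mode $n \in [N]$ and consider \emph{any} rank-$\mat{r}$ Tucker decomposition with reconstruction $\widehat{\tensor{X}} = \tensor{G} \times_1 \mat{A}^{(1)} \times_2 \cdots \times_N \mat{A}^{(N)}$. The key structural step is to unfold along mode $n$: by \Cref{eq:mode_n_mat},
\[
    \widehat{\mat{X}}_{(n)} = \mat{A}^{(n)} \mat{G}_{(n)} \parens*{\mat{A}^{(1)} \ktimes \cdots \ktimes \mat{A}^{(n-1)} \ktimes \mat{A}^{(n+1)} \ktimes \cdots \ktimes \mat{A}^{(N)}}^\intercal.
\]
Since $\mat{A}^{(n)}$ has only $R_n$ columns, submultiplicativity of rank gives $\rank(\widehat{\mat{X}}_{(n)}) \le R_n$, so $\widehat{\mat{X}}_{(n)}$ is a rank-at-most-$R_n$ approximation of $\mat{X}_{(n)}$.

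Next I would use that unfolding merely rearranges entries and hence is a Frobenius isometry, so $\norm{\tensor{X} - \widehat{\tensor{X}}}_{\frobenius}^2 = \norm{\mat{X}_{(n)} - \widehat{\mat{X}}_{(n)}}_{\frobenius}^2$. Eckart--Young identifies the minimal Frobenius error over \emph{all} rank-$R_n$ matrices as the tail sum $\sum_{i_n = R_n+1}^{I_n}(\sigma_{i_n}^{(n)})^2$, so our particular rank-at-most-$R_n$ matrix $\widehat{\mat{X}}_{(n)}$ cannot beat it:
\[
    \norm{\tensor{X} - \widehat{\tensor{X}}}_{\frobenius}^2 = \norm{\mat{X}_{(n)} - \widehat{\mat{X}}_{(n)}}_{\frobenius}^2 \ge \sum_{i_n = R_n+1}^{I_n} \parens*{\sigma_{i_n}^{(n)}}^2.
\]
Because this holds for every rank-$\mat{r}$ Tucker decomposition, taking the infimum over decompositions preserves the inequality and yields $\Loss(\tensor{X}, \mat{r}) \ge \sum_{i_n = R_n+1}^{I_n}(\sigma_{i_n}^{(n)})^2$.

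Finally, since the mode $n$ was arbitrary, the same bound holds for each $n \in [N]$, and taking the maximum over $n$ gives the claim. I expect no substantive obstacle: both the rank bound on $\widehat{\mat{X}}_{(n)}$ and the unfolding isometry follow directly from \Cref{eq:mode_n_mat} and the orthogonality facts already exploited in \Cref{lemma:upper_bound_for_loss}. The one conceptual point worth emphasizing is why the \emph{maximum} (rather than a sum) appears: each mode independently certifies a lower bound through its own unfolding and Eckart--Young, so only the strongest single-mode certificate survives — in contrast to the matching upper bound in \Cref{thm:tensor_subspace_approximation}, where the errors of all modes accumulate additively.
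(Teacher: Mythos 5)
Your proof is correct and follows essentially the same route as the paper's: unfold along mode $n$ via \Cref{eq:mode_n_mat}, bound the rank of $\widehat{\mat{X}}_{(n)}$ by $R_n$ through the factor matrix $\mat{A}^{(n)}$, use the Frobenius isometry of unfolding, apply Eckart--Young (\Cref{thm:low_rank}), and take the maximum over modes. The only cosmetic difference is that you argue over all decompositions and pass to the infimum, whereas the paper applies the argument directly to an optimal decomposition; the substance is identical.
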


\begin{proof}
Let the core tensor and factors that minimize $L(\tensor{X}, \mat{r})$ be
$\tensor{G} \in \R^{R_1 \times \dots \times R_N}$
and $\mat{A}^{(n)} \in \R^{I_n\times R_n}$,
i.e.,
\[
    \Loss(\tensor{X}, \mat{r})
    =
    \norm{\tensor{X} - \tensor{G}\times_1 \mat{A}^{(1)} \times_2 \dots \times_N \mat{A}^{(N)}}_{\frobenius}^2.
\]
Let $\widehat{\tensor{X}} = \tensor{G}\times_1 \mat{A}^{(1)} \times_2 \dots \times_N \mat{A}^{(N)}$.
Equation~\Cref{eq:mode_n_mat} and the dimensions of $\mat{A}^{(n)} \in \R^{I_n \times R_n}$
imply that
\[
    \rank\parens*{\widehat{\mat{X}}_{(n)}} \le \rank\parens*{\mat{A}^{(n)}} \le R_n,
\]
since $R_n \le I_n$.
The Eckart--Young--Mirsky theorem
(\Cref{thm:low_rank})
with the characterization of $\sigma_{i}^{(n)}$
in \Cref{thm:hosvd} gives
\begin{align}
\label{eqn:error_lower_bound}
    \norm{\tensor{X} - \widehat{\tensor{X}}}_{\frobenius}^2
=
\norm{\mat{X}_{(n)} - \widehat{\mat{X}}_{(n)}}_{\frobenius}^2 
\geq
    \sum_{i_n = R_n + 1}^{I_n} \parens*{\sigma_{i_n}^{(n)}}^2.
\end{align}
Equation~\Cref{eqn:error_lower_bound} holds for all values of $n$,
so take the equation that maximizes the right-hand side.
\end{proof}

\subsubsection{Combining the results}

Now we combine \Cref{lemma:upper_bound_for_loss}
and \Cref{lemma:lower_bound_for_f} to give an approximation inequality
that is true for all core shapes.

\begin{lemma}
\label{lemma:sandwich_loss}
For any $\tensor{X}\in\mathbb{R}^{I_1\times\cdots\times I_N}$
and any core shape $\mat{r} = (R_1, R_2, \dots, R_N) \in [I_1] \times [I_2] \times \dots \times [I_N]$, we have
\[
    \frac{1}{N} \cdot \SurrogateLoss(\tensor{X}, \mat{r})
    \le
    \Loss(\tensor{X}, \mat{r})
    \le
    \SurrogateLoss(\tensor{X}, \mat{r}),
\]
where $\SurrogateLoss(\tensor{X}, \mat{r})$ is defined in Equation~\Cref{eqn:surrogate_loss_def}.
\end{lemma}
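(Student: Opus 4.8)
The plan is to derive \Cref{lemma:sandwich_loss} directly from the two one-sided bounds already established, namely \Cref{lemma:upper_bound_for_loss} and \Cref{lemma:lower_bound_for_f}, by recalling the definition of the surrogate loss in \Cref{eqn:surrogate_loss_def}. Writing $\SurrogateLoss(\tensor{X}, \mat{r}) = \sum_{n=1}^N \sum_{i_n = R_n + 1}^{I_n} (\sigma_{i_n}^{(n)})^2$, I observe that the upper bound is immediate: \Cref{lemma:upper_bound_for_loss} states precisely that $\Loss(\tensor{X}, \mat{r}) \le \sum_{n=1}^N \sum_{i_n = R_n+1}^{I_n} (\sigma_{i_n}^{(n)})^2 = \SurrogateLoss(\tensor{X}, \mat{r})$. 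So the right-hand inequality requires essentially no additional work.

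For the left-hand inequality $\tfrac{1}{N}\SurrogateLoss(\tensor{X}, \mat{r}) \le \Loss(\tensor{X}, \mat{r})$, the idea is to pass from a maximum over modes to an average. First I would note that since each summand $\sum_{i_n = R_n+1}^{I_n} (\sigma_{i_n}^{(n)})^2$ is nonnegative, the maximum over $n \in [N]$ is at least the average of the $N$ terms, i.e.
\[
    \max_{n \in [N]} \sum_{i_n = R_n+1}^{I_n} \parens*{\sigma_{i_n}^{(n)}}^2
    \ge
    \frac{1}{N} \sum_{n=1}^N \sum_{i_n = R_n+1}^{I_n} \parens*{\sigma_{i_n}^{(n)}}^2
    =
    \frac{1}{N} \SurrogateLoss(\tensor{X}, \mat{r}).
\]
Then \Cref{lemma:lower_bound_for_f} gives $\Loss(\tensor{X}, \mat{r}) \ge \max_{n \in [N]} \sum_{i_n = R_n+1}^{I_n} (\sigma_{i_n}^{(n)})^2$, and chaining these two inequalities yields $\Loss(\tensor{X}, \mat{r}) \ge \tfrac{1}{N}\SurrogateLoss(\tensor{X}, \mat{r})$, as desired.

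Combining the two one-sided bounds then completes the sandwich. Since this lemma is purely a bookkeeping consequence of the two preceding results, there is no genuine obstacle: the only subtlety worth flagging is the averaging step, where one must remember that each per-mode tail sum is nonnegative so that ``max $\ge$ average'' is valid — this is where the factor of $N$ (the tensor order) enters, reflecting that the surrogate can overestimate the true loss by at most a factor equal to the number of modes. I would present the proof as two short displayed chains of inequalities with a one-line statement invoking each prior lemma, rather than reproving anything about the HOSVD structure.
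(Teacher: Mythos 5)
Your proposal is correct and matches the paper's proof in substance: the upper bound is quoted directly from \Cref{lemma:upper_bound_for_loss}, and the lower bound comes from the per-mode tail-sum inequality of \Cref{lemma:lower_bound_for_f} — the paper sums that inequality over all $n \in [N]$ to get $\SurrogateLoss \le N \cdot \Loss$, which is the same bookkeeping as your ``max $\ge$ average'' step. One tiny quibble: nonnegativity of the summands is not actually needed there, since the maximum of any finite collection of reals is at least their average.
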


\begin{proof}
Summing Equation~\Cref{eqn:error_lower_bound}
in the proof of~\Cref{lemma:lower_bound_for_f}
over all values of $n \in [N]$ gives
\begin{equation}
\label{eqn:sum_modewise_tail_errors}
    \sum_{n=1}^N
    \sum_{i_n = R_{n}+1}^{I_n} \parens*{\sigma_{i_n}^{(n)}}^2
    \le
    N \cdot \Loss(\tensor{X}, \mat{r})
    \implies
    \frac{1}{N} \cdot \SurrogateLoss(\tensor{X}, \mat{r})
    \le
    \Loss(\tensor{X}, \mat{r}).
\end{equation}
The upper bound
$\Loss(\tensor{X}, \mat{r}) \le \SurrogateLoss(\tensor{X}, \mat{r})$
is a restatement of \Cref{lemma:upper_bound_for_loss}.
\end{proof}

\HosvdTruncationError*

\begin{proof}
The proof follows by combining~\Cref{lemma:upper_bound_for_loss}
and
Equation~\Cref{eqn:sum_modewise_tail_errors}.
\end{proof}

\subsection{Proof of \Cref{lem:hosvd_packing_problem}}
\label{subsec:proxy_sandwhich}

\HosvdPackingProblem*

\begin{proof}
For any $n$, we have
\begin{align*}
    \sum_{i_n=1}^{I_n} \parens*{\sigma_{i_n}^{(n)}}^2
    =
    \norm{\mat{X}_{(n)}}_{\frobenius}^2
    =
    \norm{\tensor{X}}_{\frobenius}^2.
\end{align*}
Therefore, for any choice of $\mat{r} = (R_1, R_2, \dots, R_N)$, we have
\[
    \bracks*{\sum_{n=1}^N \sum_{i_n=1}^{R_n} \parens*{\sigma_{i_n}^{(n)}}^2}
    +
    \bracks*{\sum_{n=1}^N \sum_{i_n=R_n + 1}^{I_n} \parens*{\sigma_{i_n}^{(n)}}^2}
    =
    N \norm*{\tensor{X}}_{\frobenius}^2.
\]
This is a constant value that only depends on $\tensor{X}$,
so minimizing $\SurrogateLoss(\tensor{X}, \mat{r})$
is equivalent to maximizing the packing version
since both problems optimize over the same set $F$.

Lastly, we have
\begin{align*}
    \SurrogateLoss\parens*{\tensor{X}, \widetilde{\mat{r}}^*}
    \le
    \SurrogateLoss\parens*{\tensor{X}, \mat{r}^*}
    \le
    N \cdot \Loss\parens*{\tensor{X}, \mat{r}^*},
\end{align*}
where the first inequality follows from optimizing the surrogate loss
and the second inequality follows from \Cref{lemma:sandwich_loss} since
that result holds for all core shapes.
\end{proof}

\subsection{Hardness}
\label{app:hardness}

\begin{definition}
Let $N \ge 2 $ be an even integer and $w_1, \dots,w_N \ge 1$ be integers.
The EQUIPARTITION problem asks to determine whether there exists a subset $S\subseteq [N]$ of size $n/2$ such that
\[
\sum_{i \in S} w_i = \sum_{i \in [N] \setminus S} w_i.
\]
\end{definition}

\begin{lemma}[{\citealt[SP12]{garey1979computers}}]
\textnormal{EQUIPARTITION} is \textnormal{NP-complete}.
\end{lemma}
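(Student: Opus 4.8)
The plan is to establish both membership in NP and NP-hardness. Membership is immediate: a certificate is the subset $S \subseteq [N]$ itself, and a verifier checks in polynomial time that $\abs{S} = N/2$ and that $\sum_{i \in S} w_i = \sum_{i \in [N] \setminus S} w_i$, which requires only summing the given integers. So the only real work is the hardness direction.

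For hardness, I would reduce from the ordinary \textnormal{PARTITION} problem (the version \emph{without} a cardinality constraint), which is the classical NP-complete problem. Given a PARTITION instance $a_1, \dots, a_m \ge 1$ with total $\Sigma = \sum_{i=1}^m a_i$, I first dispose of the trivial case where $\Sigma$ is odd (then PARTITION is automatically a no-instance). Otherwise I build an EQUIPARTITION instance on $N = 2m$ elements by padding: set $w_i = a_i + 1$ for $i \in [m]$ and $w_{m+i} = 1$ for $i \in [m]$. All weights are positive integers, $N$ is even, and the construction is clearly polynomial.

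The key structural observation is that the padding makes the cardinality constraint ``free.'' Any size-$m$ subset $S$ that takes $k$ elements from the first block and $m-k$ from the second has weight $\sum_{i \in S \cap [m]} a_i + k + (m-k) = \sum_{i \in S \cap [m]} a_i + m$, which is independent of $k$; since the total weight is $\Sigma + 2m$, balancing the two halves is equivalent to $\sum_{i \in S \cap [m]} a_i = \Sigma/2$. Thus a balanced equipartition exists if and only if the original elements admit a subset summing to $\Sigma/2$, i.e. if and only if PARTITION is a yes-instance. I would prove both directions explicitly: a PARTITION witness $T \subseteq [m]$ with $\sum_{i \in T} a_i = \Sigma/2$ yields the equipartition $S = T \cup \{m+1, \ldots, 2m - \abs{T}\}$, padded up to size exactly $m$ with the cheap unit elements, and conversely any balanced $S$ of size $m$ restricts to such a $T$ by the displayed weight identity.

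The main obstacle is purely bookkeeping rather than any deep idea: I must confirm that the unit-padding block is large enough to reach cardinality exactly $N/2$ regardless of $\abs{T}$ (it is, since there are $m$ unit elements and at most $m$ are ever needed), verify that the additive padding cancels exactly so the size constraint never interferes with the sum condition, and keep every weight strictly positive to satisfy the $w_i \ge 1$ requirement of the EQUIPARTITION definition. Combining NP membership with this polynomial-time reduction then gives NP-completeness.
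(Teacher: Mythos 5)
Your proof is correct. Note that the paper does not actually prove this lemma---it cites it directly from Garey and Johnson (the comment under problem SP12, which states that PARTITION remains NP-complete when the two parts are required to have equal cardinality)---so there is no in-paper argument to compare against; your unit-padding reduction from PARTITION is precisely the canonical argument behind that citation. The key identity is right: for any subset of size $m$, the $+1$ offsets on the original elements and the unit weights of the padding elements together contribute exactly $m$ regardless of how the subset splits between the two blocks, so the cardinality constraint decouples from the sum condition, and your two explicit directions (padding a PARTITION witness $T$ up to size $m$ with $m - \abs{T} \le m$ available units, and restricting a balanced $S$ to $S \cap [m]$) both go through. One small simplification you could make: the case split on $\Sigma$ odd is unnecessary, since in that case the constructed instance has odd total weight $\Sigma + 2m$ and is therefore automatically a no-instance of EQUIPARTITION, so the same construction works uniformly; as written, your case split obliges you to emit some canonical no-instance to keep the map a genuine many-one reduction, which is trivial but worth stating.
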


We now give a reduction from the equipartition problem to the Tucker packing problem.

\Hardness*

\begin{proof}
Let $T, w_1,\dots, w_T$ be an instance of EQUIPARTITION
where $w_n \geq 2$ for all $n \in [T]$.
Notice that the assumption $w_n \geq 2$ is without loss of generality because
we can multiply all of the values $w_1,\dots, w_T$ by two.

Let $M = \sum_{n \in [T]} w_n$ be the sum of all weights,
and let $N \geq T$ be the smallest integer such that $2^{N-T/2} > 4(N-T) + 3M/2$.
Now we construct an instance of the Tucker packing problem.
For each $n\in[T]$, let:
\begin{itemize}
    \item $I_n = w_n$
    \item $a^{(n)}_1 = 2M$
    \item $a^{(n)}_2 = M+w_n$
    \item $a^{(n)}_{i_n}=0$, for all $i_n \in [I_n]\setminus\{1,2\}$
\end{itemize}
Next, for each $n \in [N]\setminus[T]$, let:
\begin{itemize}
    \item $I_n = 2$
    \item $a^{(n)}_1 = a^{(n)}_2 = 2M$
\end{itemize}
Finally, set the budget to be $c = 2^{N-T/2} + 4(N-T) + 3M/2$.

First, notice that this is a valid instance of the Tucker packing problem
since $a^{(n)}_1 \geq a^{(n)}_2 \geq \cdots \geq a^{(n)}_{I_n}$ for all $n\in[N]$.
Further, since $N = O(T \cdot \log_2(3M/2))$ and $a^{(n)}_{i_n} = 0$ for $i_n \in [I_n]\setminus\{1,2\}$,
the size of the description of this problem is polynomial in the size of the description of the corresponding EQUIPARTITION problem.

Now we consider a decision version of this Tucker packing problem in which we are asked to determine whether there exists a feasible solution $(R_1,\ldots,R_N)$ such that
\begin{equation}
\label{eqn:reduction_decision_question}
    \sum_{n=1}^N \sum_{i_n=1}^{R_n} a_{i_n}^{(n)} \geq M(4N-3T/2)+ M/2.
\end{equation}

We show that a positive answer to the decision version of the Tucker packing problem in ~\Cref{eqn:reduction_decision_question}
implies a positive answer to the EQUIPARTITION problem and vice versa.

Suppose the answer to the decision version of the Tucker packing problem is YES, and $\mat{r}^*$ is an optimal solution such that 
\[
    \sum_{n=1}^N \sum_{i_n=1}^{R^*_n} a_{i_n}^{(n)} \geq M(4N-3T/2)+ M/2
    \quad ~~\text{and}~~ \quad
    \prod_{n=1}^N R^*_n + \sum_{n=1}^N I_n R^*_n \le c.
\]
Since
\[
    c = 2^{N-T/2} + 4(N-T) + 3M/2 < 2\cdot 2^{N-T/2} = 2^{N-T/2 + 1},
\]
there are at most $N-T/2$ values of $R^*_n$ such that $R^*_n \ge 2$.
Further, since $a^{(n)}_{i_n}=0$ for all $i_n \ge 3$,
we never have $R^*_n > 2$ in a minimal optimal solution.
It follows that $R^*_n \in \{1, 2\}$ for all $n \in [N]$, and
\[
    \prod_{n=1}^N R^*_n \leq 2^{N-T/2}.
\]

Next, we establish the structure of an optimal solution to this Tucker packing instance.
Observe that $\widehat{\mat{r}}=(\widehat{R}_1,\dots,\widehat{R}_N)$ with 
$\widehat{R}_1 = \dots = \widehat{R}_T = 1$
and
$\widehat{R}_{T+1} = \dots = \widehat{R}_N=2$
is a feasible solution $\mat{r}$ that achieves an objective value of $M(4N-2T)$. 
Now consider any feasible solution in which there exists
$i\in[T]$ and $j\in[N]\setminus[T]$
such that $R_i = 2$ and $R_j=1$.
If we switch the values of $R_i$ and $R_j$,
then the cost decreases by $w_i - 2 \geq 0$ (i.e., the solution is still feasible),
and the objective value increases by $M-w_{i}>0$.
Therefore, since $\widehat{\mat{r}}$ is feasible,
in an optimal solution
we have $R_n = 2$ for all $n \in [N]\setminus [T]$
and at most $T/2$ of the $R_n$'s for $n \in [T]$ are equal to two.

Let $S=\{i\in [T]: R^*_i=2\}$.
Then by construction we have
\[
    \sum_{n\in S} a^{(n)}_2
    =
    \sum_{n \in S} (M + w_{n})
    =
    M \abs{S} + \sum_{n\in S} w_n < M(\abs{S} + 1).
\]

Moreover, since the answer to the decision problem is YES
and in an optimal solution we have $R_n^* = 2$ for all $n\in [N]\setminus[T]$,
it follows that
\begin{align}
\label{eq:a2-bound}
    \sum_{n\in S} a^{(n)}_2
    &=
    f(\mat{r}^*) - \sum_{n = 1}^N a_{1}^{(n)} - \sum_{n = T+1}^N a_{2}^{(n)} \\
    &=
    f(\mat{r}^*) - 2NM - 2(N-T)M \notag \\
    &\ge
    M(4N - 3T/2) + M/2 - 2NM - 2(N-T)M \notag \\
    &=
    MT/2 + M/2. \notag
\end{align}
Therefore,
\[
    M(\abs{S}+1) > MT/2+M/2,
\]
which implies
$\abs{S} > T/2-1/2$,
so $\abs{S} \geq T/2$ since $\abs{S}$ and $T/2$ are integers.
Using the characterization above about an optimal solution
together with the fact that the budget is strictly less than $2^{N-T/2 + 1}$
gives us $\abs{S} \leq T/2$.
Thus, a YES to the decision problem
implies that $\abs{S} = T/2$,
which further
implies $\prod_{n=1}^N R^*_n = 2^{N-T/2}$.

It then follows from our choice of budget $c$ that
\[
    \sum_{n=1}^N I_n R^*_n \leq c - 2^{N-T/2} = 4(N-T) + 3M/2,
\]
which then by the definition of $I_n$ implies that
\[
    \parens*{\sum_{n=1}^T w_n R_{n}^* + \sum_{n=T+1}^N 2 R_{n}^*}
    =
    \parens*{M + \sum_{n \in S}w_n} + 4(N - T)
    \le 4(N - T) + 3M/2
    \implies
    \sum_{n\in S} w_n \leq M/2.
\]
Furthermore,
using \eqref{eq:a2-bound}, the definition of the $a_{i_n}^{(n)}$'s,
and the fact that $\abs{S}=T/2$, we have
\[
    \sum_{n \in S} a_{2}^{(n)}
    =
    \sum_{n \in S} (M + w_n)
    =
    |S| M + \sum_{n \in S} w_n
    \ge MT/2 + M/2
    \implies
    \sum_{n\in S} w_n \geq M/2.
\]
Putting everything together,
we get $\sum_{n\in S} w_n = M/2$.
Thus, $S$ is a solution for the EQUIPARTITION problem.

Now suppose the answer to the EQUIPARTITION problem is YES.
Let $S \subseteq [T]$ such that $|S| = T/2$ and $\sum_{n\in S} w_n = M/2$.
Construct $\mat{r}^*$ as follows:
For each $n \in S \cup ([N] \setminus [T])$, set $R^*_n = 2$;
for each $n \in [T] \setminus S$, set $R^*_n = 1$.

Then, by the definitions of $I_n$ and $a_{i_n}^{(n)}$ above, we have
\[
    \sum_{n=1}^N \sum_{i_n=1}^{R^*_n} a_{i_n}^{(n)} \geq M(4N-3T/2)+ M/2
    \quad ~~\text{and}~~ \quad
    \prod_{n=1}^N R^*_n + \sum_{n=1}^N I_n R^*_n \le c,
\]
which completes the proof. \qedhere

\end{proof}

\subsection{Translating between approximate maximization and minimization}
\label{app:complement_approximation_error}

We prove an additive-error guarantee that shows how a $(1-\varepsilon')$-approximate
solution to the Tucker packing problem, i.e., a core shape $\mat{r} \in [I_1] \times \dots \times [I_N]$,
can lead to an increase in the surrogate loss objective.

\ComplementApproximationError*

\begin{proof}
Let $\varepsilon' = \varepsilon / N$
and $\widetilde{\mat{r}}^*$ be the optimal shape for the surrogate loss $\SurrogateLoss$.
If $\mat{r} = (R_1, \dots, R_N)$ is a $(1 - \varepsilon')$-approximation to the Tucker packing problem,
it follows that
\begin{align*}
    \frac{\widetilde{L}(\tensor{X}, \mat{r})}{\norm*{\tensor{X}}_{\frobenius}^2}
    &=
    \frac{N \norm*{\tensor{X}}_{\frobenius}^2
    - \sum_{n=1}^N \sum_{i_n = 1}^{R_n} \parens*{\sigma_{i_n}^{(n)}}^2}{\norm*{\tensor{X}}_{\frobenius}^2} \\
    &\le
    \frac{N \norm*{\tensor{X}}_{\frobenius}^2
    - \parens*{1 - \varepsilon'} \sum_{n=1}^N \sum_{i_n = 1}^{\widetilde{R}^*_n} \parens*{\sigma_{i_n}^{(n)}}^2}{\norm*{\tensor{X}}_{\frobenius}^2} \\
    &=
    \frac{\SurrogateLoss\parens*{\tensor{X}, \widetilde{\mat{r}}^*}}{\norm*{\tensor{X}}_{\frobenius}^2}
    +
    \varepsilon' \parens*{N - \frac{\SurrogateLoss\parens*{\tensor{X}, \widetilde{\mat{r}}^*}}{\norm*{\tensor{X}}_{\frobenius}^2}} \\
    &\le
    \frac{\SurrogateLoss\parens*{\tensor{X}, \widetilde{\mat{r}}^*}}{\norm*{\tensor{X}}_{\frobenius}^2}
    +
    \varepsilon.
\end{align*}

\Cref{thm:tensor_subspace_approximation} gives us $\Loss(\tensor{X}, \mat{r}) \leq \SurrogateLoss(\tensor{X}, \mat{r}) \leq N \cdot \Loss(\tensor{X}, \mat{r})$.
By definition $\SurrogateLoss\parens*{\tensor{X}, \widetilde{\mat{r}}^*} \leq \SurrogateLoss\parens*{\tensor{X}, \mat{r}^*}$, so we have
\[
    \textnormal{RRE}(\tensor{X}, \mat{r}) = 
    \frac{\Loss(\tensor{X}, \mat{r})}{\norm*{\tensor{X}}_{\frobenius}^2}
    \le \frac{\widetilde{L}(\tensor{X}, \mat{r})}{\norm*{\tensor{X}}_{\frobenius}^2} \leq 
    \frac{\SurrogateLoss\parens*{\tensor{X}, \widetilde{\mat{r}}^*}}{\norm*{\tensor{X}}_{\frobenius}^2}
    +
    \varepsilon \leq 
    \frac{\SurrogateLoss\parens*{\tensor{X}, \mat{r}^*}}{\norm*{\tensor{X}}_{\frobenius}^2}
    +
    \varepsilon\leq 
    N \cdot \frac{\Loss\parens*{\tensor{X}, \mat{r}^*}}{\norm*{\tensor{X}}_{\frobenius}^2}
    +
    \varepsilon =
    N \cdot \textnormal{RRE}(\tensor{X}, \mat{r}^*) + \varepsilon,
\]
as desired.
\end{proof}
\section{Missing analysis from \Cref{sec:algorithm}}
\label{app:algorithm}

\subsection{Proof of \Cref{lem:grid-search}}

\GridSearch*

\begin{proof}
Let $k_n \ge 0$ be the largest integer such that $(1+\varepsilon)^{k_n} \leq R^*_n$
for each $n\in[N]$.
Further, let
\[
    \widehat{R}_{n} = \ceil*{(1+\epsilon)^{k_{n}}}.
\]
Since $R^*_n$ is an integer, we know that $\widehat{R}_{n} \leq R^*_n$.
Therefore, because $F$ is downwards closed,
$\widehat{\mat{r}} = (\widehat{R}_{1},\dots,\widehat{R}_{N})$
is a feasible solution to \eqref{eq:comb_opt_whole_rep_grid_prob}.
It follows that
\begin{align}
\label{eqn:approximate_objective_00}
    f(\widehat{\mat{r}}) \leq f(\mat{r}^{(\varepsilon)}).
\end{align}

Now we will show that $f(\mat{r}^*) \leq (1+\varepsilon)f(\widehat{\mat{r}})$.
Since $a_{1}^{(n)} \geq \dots \geq a_{I_n}^{(n)} \geq 0$ for all $n \in [N]$, we have
\begin{equation}
\label{eqn:approximate_objective_01}
    (1+\epsilon) \sum_{i_n = 1}^{\widehat{R}_n} a_{i_n}^{(n)}
    \geq
    \sum_{i_n = 1}^{\floor{(1+\epsilon) \widehat{R}_n }} a_{i_n}^{(n)}.
\end{equation}
By the definition of $k_n$, it follows that
\begin{align*}
    (1+\epsilon) \widehat{R}_n
    &=
    (1+\epsilon) \ceil*{(1+\epsilon)^{k_{n}}} \\
    &\geq
    (1+\epsilon)^{k_n+1} \\
    &>
    \ropti{n}.
\end{align*}
Since $\ropti{n}$ is an integer, we have
$\floor{(1+\epsilon) \widehat{R}_n} \geq \ropti{n}$.
Therefore, using Equation~\Cref{eqn:approximate_objective_01} we have
\begin{equation}
\label{eqn:approximate_objective_01}
    (1+\epsilon)\sum_{i_n = 1}^{\widehat{R}_n} a_{i_n}^{(n)}
    \geq
    \sum_{i_n = 1}^{\ropti{n}} a_{i_n}^{(n)}.
\end{equation}
Finally, summing over $n\in[N]$ and using Equation~\Cref{eqn:approximate_objective_00} gives us
\[
    (1+\epsilon)f(\mat{\widehat{r}}) \geq f(\mat{r}^*)
    \implies
    f(\mat{r}^{(\varepsilon)}) \ge (1+\varepsilon)^{-1} f(\mat{r}^*).
\]

\paragraph{Algorithm.}
Now we design and analyze a simple algorithm to solve the grid-search problem in~\Cref{eq:comb_opt_whole_rep_grid_prob}.
First observe that
\[
    |S_n^{(\varepsilon)}|
    =
    1 + \floor*{\log_{1+\varepsilon}(I_n)}
    =
    1 + \floor*{\frac{\log_2(I_n)}{\log_2(1+\varepsilon)}}.
\]
It follows that the number of feasible solutions for Problem~\eqref{eq:comb_opt_whole_rep_grid_prob} is
\[
    O\parens*{\prod_{n=1}^N \parens*{1 + \frac{\log_2(I_n)}{\log_2(1 + \varepsilon)}} }
    =
    O\parens*{\frac{1}{\log_2^N(1 + \varepsilon)} \prod_{n=1}^N \parens*{1 + \log_2(I_n)} },
\]
since $\log_2(1+\varepsilon) \le 1$ for $\varepsilon \le 1$.
Further, observing that $\log_2(1+\varepsilon) \geq \varepsilon$ for $0 < \varepsilon \leq 1$ implies a bound of
\[
    O\parens*{\frac{1}{\varepsilon^N} \prod_{n=1}^n \parens*{1 + \log_2(I_n)}}
\]
on the number of feasible solutions.

After computing the prefix sums for elements of $S_n^{(\varepsilon)}$'s,
we can iterate over the elements of $S_1^{(\varepsilon)} \times \cdots \times S_N^{(\varepsilon)}$
in the lexicographical order,
check for feasibility, and compute the objective value $f(\mat{r})$
for any candidate solution in amortized time $O(1)$.
Finally, note that the prefix sums for elements of $S_n^{(\varepsilon)}$ can be computed in $O(I_n)$ time.
\end{proof}

\paragraph{Rationale for two different types of budget splits in \Cref{alg:total-size-Tucker}.}
Instead of considering two different budget splitting methods
(i.e., with $c_{\text{factor}}$ and $c_{\text{core}}$),
one could just consider the budget splits
$c_{\text{factor}} \in [\sum_{n=1}^N I_n^2]$
over problems of the form:
\begin{align*}
            \text{maximize~~} & \MaxObj(\mat{r}) \\ 
            \nonumber \text{subject to~~}
                & \textstyle \prod_{n=1}^N R_n \le c - c_{\text{factor}} \\
                \nonumber
                & \textstyle \sum_{n=1}^N I_n R_n \le c_{\text{factor}} \\
                \nonumber
                & R_n \in [I_n] \hspace{1cm} \forall n\in [N]
\end{align*}
This approach also gives a PTAS.
The running time, however, is $O\parens{\parens{\sum_{n=1}^N I_n^2} \parens{\sum_{n=1}^N I_n}^{O\parens*{1/\varepsilon}}}$. In contrast, the running time of \Cref{alg:total-size-Tucker} is $O\parens{\parens{\log_{1+\epsilon}(c) + \ceil*{1/\varepsilon}\sum_{n=1}^N I_n} \parens{\sum_{n=1}^N I_n}^{O\parens*{1/\varepsilon}}}$.
Note that $c \le \sum_{n=1}^N I_n^2 + \prod_{n=1}^N I_n$.
Therefore, $\log_{1+\epsilon}(c) = O(\frac{1}{\epsilon}\sum_{n=1}^N \log_2(I_n)) \ll \ceil*{1/\varepsilon}\sum_{n=1}^N I_n$,
where the first equality follows from the concavity of the logarithm function.
Thus, our running time is $O\parens{\parens{ \ceil*{1/\varepsilon}\sum_{n=1}^N I_n} \parens{\sum_{n=1}^N I_n}^{O\parens*{1/\varepsilon}}}$,
which is always smaller than $O\parens{\parens{\sum_{n=1}^N I_n^2} \parens{\sum_{n=1}^N I_n}^{O\parens*{1/\varepsilon}}}$.

\section{Tree tensor networks}
\label{app:tree-tensor-networks}

\begin{remark}
Although the tree tensor network is considered for general Hilbert spaces and has been defined in full generality using the tensor network notations (see, e.g., \citet[Chapter~11]{hackbusch2019tensor} and \citet[Chapter~3]{kramer2020tree}),
we consider finite-dimensional Euclidean spaces to keep our notation simple.
\end{remark}

\begin{remark}
\cref{thm:tree-tensor-packing} is implied by \cref{lem:grid-search}.
\end{remark}

\begin{figure}[H]
    \centering
    \begin{tikzpicture}[
roundnode/.style={circle, draw=green!60, fill=green!5, very thick, minimum size=5mm},
squarednode/.style={rectangle, draw=red!60, fill=red!5, very thick, minimum size=5mm},
]
\node[squarednode]      (in) {$v$};
\node[roundnode]        (leaf2) [below left=1 cm and 0.3 cm of in] {2};
\node[roundnode]        (leaf3) [below right=1 cm and 0.3 cm of in] {3};
\node[roundnode]        (leaf1) [left=of leaf2] {1};
\node[roundnode]        (leaf4) [ right=of leaf3] {4};

\draw[-] (leaf1.north)  -- node[midway,above]{$e_1$} (in.south);
\draw[-] (leaf2.north)  -- node[midway,left]{$e_2$} (in.south);
\draw[-] (leaf3.north)  -- node[midway,right]{$e_3$} (in.south);
\draw[-] (leaf4.north)  -- node[midway,above]{$e_4$} (in.south);
\end{tikzpicture}
    \caption{Tree tensor network corresponding to a Tucker decomposition.}
    \label{fig:tree-tensor-Tucker}
    \vspace{-0.20cm}
\end{figure}
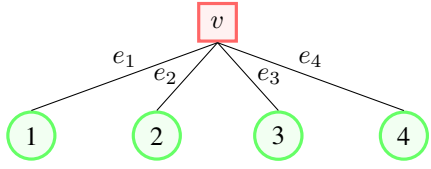

\paragraph{Example (Tucker decomposition).}
Tucker decomposition corresponds to a tree tensor of depth one.
For example, the tree tensor in \cref{fig:tree-tensor-Tucker} consists of matrices $\mat{A}_1\in\R^{I_1\times R_{e_1}},\mat{A}_2\in\R^{I_2\times R_{e_2}},\mat{A}_3\in\R^{I_3\times R_{e_3}},\mat{A}_4\in\R^{I_4\times R_{e_4}}$, and tensor $\tensor{T}_{v}\in\R^{R_{e_1}\times R_{e_2}\times R_{e_3}\times R_{e_4}}$. The corresponding reconstruction is \[
\widehat{\tensor{X}} =  \tensor{T}_v\times_{e_4} \mat{A}_4\times_{e_3} \mat{A}_3\times_{e_2} \mat{A}_2\times_{e_1} \mat{A}_1.
\]

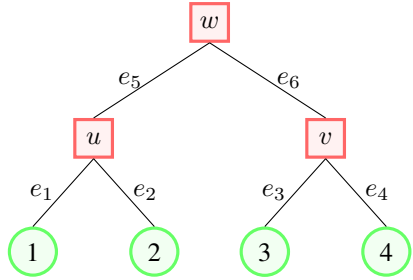
\begin{figure}[H]
    \vspace{-0.20cm}
    \centering
    \begin{tikzpicture}[
roundnode/.style={circle, draw=green!60, fill=green!5, very thick, minimum size=5mm},
squarednode/.style={rectangle, draw=red!60, fill=red!5, very thick, minimum size=5mm},
]
\node[squarednode]      (in3) {$w$};
\node[squarednode]      (in1) [below left=1 cm and 1 cm of in3] {$u$};
\node[squarednode]      (in2) [below right=1 cm and 1 cm of in3] {$v$};
\node[roundnode]        (leaf1) [below left=1 cm and 0.3 cm of in1] {1};
\node[roundnode]        (leaf2) [below right=1 cm and 0.3 cm of in1] {2};
\node[roundnode]        (leaf3) [below left=1 cm and 0.3 cm of in2] {3};
\node[roundnode]        (leaf4) [below right=1 cm and 0.3 cm of in2] {4};

\draw[-] (leaf1.north)  -- node[midway,left]{$e_1$} (in1.south);
\draw[-] (leaf2.north)  -- node[midway,right]{$e_2$} (in1.south);
\draw[-] (leaf3.north)  -- node[midway,left]{$e_3$} (in2.south);
\draw[-] (leaf4.north)  -- node[midway,right]{$e_4$} (in2.south);
\draw[-] (in1.north)  -- node[midway,left]{$e_5$} (in3.south);
\draw[-] (in2.north)  -- node[midway,right]{$e_6$} (in3.south);
\end{tikzpicture}
    \caption{Tree tensor network example corresponding to a hierarchical Tucker decomposition.}
    \label{fig:tree-tensor}
    \vspace{-0.20cm}
\end{figure}

\paragraph{Example (Hierarchical Tucker decomposition).}
To to better understand~\cref{def:tree-tensor}, we give an example for a tensor of order $4$. Consider the tree illustrated in \cref{fig:tree-tensor}. This tree tensor network corresponds to matrices $\mat{A}_1\in\R^{I_1\times R_{e_1}},\mat{A}_2\in\R^{I_2\times R_{e_2}},\mat{A}_3\in\R^{I_3\times R_{e_3}},\mat{A}_4\in\R^{I_4\times R_{e_4}}$, and tensors $\tensor{T}_{u}\in\R^{R_{e_1}\times R_{e_2}\times R_{e_5}},\tensor{T}_{v}\in\R^{R_{e_3}\times R_{e_4}\times R_{e_6}},\tensor{T}_{w}\in\R^{R_{e_5}\times R_{e_6}}$.
The corresponding reconstruction is
\[
    \widehat{\tensor{X}} = \tensor{T}_w \times_{e_6} \tensor{T}_v \times_{e_5} \tensor{T}_u\times_{e_4} \mat{A}_4\times_{e_3} \mat{A}_3\times_{e_2} \mat{A}_2\times_{e_1} \mat{A}_1.
\]

\vspace{-0.75cm}
\section{Experiments}
\label{app:experiments}

We provide short descriptions about each of the tensor datasets used in 
the core shape experiments in \Cref{sec:experiments},
which gives some insight into why the algorithms build the core shapes they do.

\paragraph{Cardiac MRI.} $256 \times 256 \times 14 \times 20$
tensor whose elements are MRI measurements indexed by $(x,y,z,t)$,
where $(x,y,z)$ is a point in space and $t$ corresponds to time.

\paragraph{Hyperspectral.} $1024 \times 1344 \times 33$ tensor of
time-lapse hyperspectral radiance images of a nature scene
that is undergoing illumination changes~\citep{nascimento2016spatial}.

\paragraph{VicRoads.} $1084 \times 2033 \times 96$ tensor containing
2033 days of traffic volume data from Melbourne and its surrounding suburbs.
This data comes from a network of 1084 road sensors
measured in 15 minute intervals~\citep{schimbinschi2015traffic}.

\paragraph{COIL-100.} $7200 \times 128 \times 128 \times 3$
tensor containing
7200 colored photos of 100 different objects (72 images per object)
taken at 5-degree rotations~\citep{nene1996columbia}.
This is a widely-used dataset in the computer vision research community.

\end{document}